\newtheorem{proposition}{Proposition}
\newtheorem{lemma}{Lemma}
\begin{document}

%
\title{Cross-layer Theoretical Analysis of NC-aided Cooperative ARQ Protocols in Correlated Shadowed Environments (Extended Version)}


\author{Angelos~Antonopoulos,~\IEEEmembership{Member,~IEEE,} Aris~S.~Lalos,~\IEEEmembership{Member,~IEEE,}
 Marco~Di~Renzo,~\IEEEmembership{Member,~IEEE,} and~Christos~Verikoukis,~\IEEEmembership{~Senior Member,~IEEE}
\IEEEcompsocitemizethanks{
\IEEEcompsocthanksitem A. Antonopoulos and C. Verikoukis are with the Telecommunications Technological Centre of Catalonia (CTTC), Castelldefels, Spain. 
E-mail:\{aantonopoulos, cveri\}@cttc.es
\IEEEcompsocthanksitem A. S. Lalos is with the Department of Signal Theory and Communications (TSC) of the Technical University of Catalonia (UPC), Barcelona, Spain. 
E-mail:  aristeidis.lalos@tsc.upc.edu
\IEEEcompsocthanksitem M. Di Renzo is with the Laboratory of Signals and Systems (L2S), University Paris-Sud, Paris, France. 
E-mail:  marco.direnzo@lss.supelec.fr
\IEEEcompsocthanksitem This paper has been peer-reviewed and provisionally accepted in IEEE Transactions on Vehicular Technology. Due to space limitations, the published version will not include the Appendices. The authors would like to express their gratitude to Prof. Chi Zhang (Associated Editor) and the anonymous reviewers for their constructive comments that contributed to the enhancement of this work.}
}

\maketitle              


\begin{abstract}
In this paper, we propose a cross-layer analytical model for the study of Network Coding (NC)-based Automatic Repeat reQuest (ARQ) Medium Access Control (MAC) protocols in correlated slow faded (shadowed) environments, where two end nodes are assisted by a cluster of relays to exchange data packets. The goal of our work is threefold: i) to provide general Physical (PHY) layer theoretical expressions for estimating crucial network parameters (i.e., network outage probability and expected size of the active relay set), applicable in two-way communications, ii) to demonstrate how these expressions are incorporated in theoretical models of the upper layers (i.e., MAC), and iii) to study the performance of a recently proposed NC-aided Cooperative ARQ (NCCARQ) MAC protocol under correlated shadowing conditions. Extensive Monte Carlo experiments have been carried out to validate the efficiency of the developed analytical model and to investigate the realistic performance of NCCARQ. Our results indicate that the number of active relays is independent of the shadowing correlation in the wireless links and reveal intriguing trade-offs between throughput and energy efficiency, highlighting the importance of cross-layer approaches for the assessment of cooperative MAC protocols.
\end{abstract}



\begin{IEEEkeywords}
MAC protocols, cooperative communications, network coding, cross-layer.
\end{IEEEkeywords}

\IEEEpeerreviewmaketitle

\section{Introduction}
\label{sec:intro}

The increasing density of wireless networks, due to the proliferation of mobile devices, leverages the deployment of cooperative systems, where the communication between a source and a destination takes place via intermediate relay nodes. The incorporation of multiple relays can lead to significant improvements, by appropriately exploiting the degrees of freedom that are introduced in the network. However, the fact that several relay nodes require simultaneous access to the channel stresses the need for new Medium Access Control (MAC) protocols for the effective relay coordination. The efficient MAC protocol design and assessment require the consideration of realistic physical (PHY) layer models and channel conditions (e.g., fast fading and shadowing), making imperative the need for a MAC/PHY cross-layer approach \cite{cl}.

Although the cross-layer concept was initially applied in conventional networks \cite{cl5,cl1,shad3,cl6}, its potential is also significant in cooperative scenarios, where the role of the PHY layer is even more pronounced, since the selection of the relay set and the need for cooperation are determined by the quality of the links between the communicating nodes. To that end, the authors in \cite{cross} propose a cross-layer theoretical model to analyze the performance of a cooperative wireless system that employs an Automatic Repeat reQuest (ARQ) mechanism for error control in fast fading environments. The same idea is extended in \cite{cl7}, where the authors present an analytical framework for studying the performance of reliable ARQ-based relaying schemes in multihop cooperative systems. The study in \cite{shad4} introduces a cross-layer analytical model for the assessment of a multi-relay cooperative ARQ MAC protocol by taking into account the shadowing effect. In \cite{cl8}, a cooperative cross-layer MAC protocol, which combines space-time coding and adaptive modulation at the PHY layer, is proposed and analyzed. More recently, the work published in \cite{cl2} studies fundamental cooperative issues (i.e., when and whom to cooperate with) from a cross-layer perspective in distributed wireless networks.

In addition to the one-way cooperative schemes, during the last few years, the implementation of new software applications, based on Voice over IP (VoIP) and instant messaging, has driven the need for two-way (bidirectional) communication, further complicating the design of effective cooperative systems. To deal with this new trend for bidirectional communication, Network Coding (NC) has been proposed as an alternative routing mechanism that enables the relays to mix the incoming data packets before forwarding them to their final destinations. Apparently, the application of NC implies straightforward gains in bidirectional networks, since the relay nodes require less resources for their transmissions. This potential advantage has lately inspired several works \cite{cope,argyriou,phoenix,wang,umehara}, focusing on the design of novel cooperative MAC protocols with NC capabilities to enhance the throughput, the energy efficiency and the robustness of wireless networks. In the same context, motivated by the great interest that ARQ schemes have attracted in the literature, we have introduced an NC-aided Cooperative ARQ-based MAC protocol \cite{nccarq}, namely NCCARQ, which exploits the benefits of both NC and ARQ to improve the performance of cooperative wireless networks.

Despite their inherent differences on the channel access rules, most NC-aided MAC protocols share the common assumption of either ideal channel conditions or simplified PHY layer models.  However, the existing cross-layer models for simple one-way cooperative networks do not apply directly in bidirectional communications, where the relays are selected according to the packets that have been received from both directions. In addition, another basic limitation of the existing models is the assumption of independent wireless links in the network, although recent studies \cite{cor1,cor2,cor3,cor4} have indicated the impact of shadowing spatial correlation (due to geographically proximate wireless links) on the performance of cooperative MAC protocols. Hence, considering the above limitations, the accurate performance evaluation of NC-aided protocols in correlated environments becomes essential for an efficient network planning, reducing the deployment and operational cost of the cooperative systems.

In this paper, taking into account the gaps in the current literature along with the importance of cross-layer modeling, we present a joint MAC/PHY theoretical framework to evaluate the throughput and the energy efficiency of NC-aided ARQ schemes under correlated shadowing conditions. Our main contributions can be summarized as follows:

\begin{enumerate}
\item We introduce a cross-layer analytical framework that jointly considers the MAC layer operation and the PHY layer conditions in NC-based communication scenarios. Without loss of generality, we use as an exemplary case the recently proposed NCCARQ MAC protocol \cite{nccarq} to study how correlated shadowing affects crucial protocol parameters.
\item We analytically demonstrate that the average number of active relays in the network is independent of the correlation among the wireless links from the end nodes to the relays.
\item We provide practical insights for efficient network planning for NC-based cooperative communications by revealing interesting tradeoffs between the throughput and energy efficiency performance in the network under realistic channel conditions.
\end{enumerate}

The remainder of this paper is organized as follows. Section \ref{sec:system} presents our system model, focusing on a two-way communication scenario with correlated wireless links. Section \ref{sec:impact} provides an overview for NCCARQ, highlighting the impact of the PHY layer on the protocol design and performance. In Section \ref{sec:analysis}, we introduce a joint MAC/PHY analytical framework for the throughput and the energy efficiency of the network. The validation of the model and the performance evaluation of the protocol under correlated shadowing conditions are provided in Section \ref{sec:performance}. Finally, Section \ref{sec:conclusions} concludes the paper.

\section{System Description}
\label{sec:system}

\subsection{Channel Model}
\label{sec:channel}

The network under consideration (Fig. \ref{f1}) consists of two end nodes ($A$ and $B$) that have data packets to exchange in a bidirectional communication, and a set of $n$ intermediate nodes ($R_1,R_2,...R_n$) with NC capabilities that act as relays in this network setup, assisting the communication towards both directions. The instantaneous received power at any given node $j$ from transmissions by node $i$ is denoted by $\gamma_{ij}=\frac{P_{Tx}}{d_{ij}^a} \left|h_{f_{ij}}\right|^2\left|h_{s_{ij}}\right|^2$ \cite[Eq. (1.1)]{thesismary}, where: i) $P_{Tx}$ is the common transmission power for all nodes in the network, ii) $d_{ij}$ is the $(i,j)$ distance, iii) $a$ is the path-loss coefficient, iv) $h_{f_{ij}}$ is the fast fading coefficient, modeled as a Nakagami-m random variable (RV) with $\mathbf{E}\left[\left|h_{f_{ij}}\right|^2\right]=1$, and v) $h_{s_{ij}}$ is the shadowing coefficient.

\begin{figure}[htb]
\centering
\includegraphics[width=1\columnwidth]{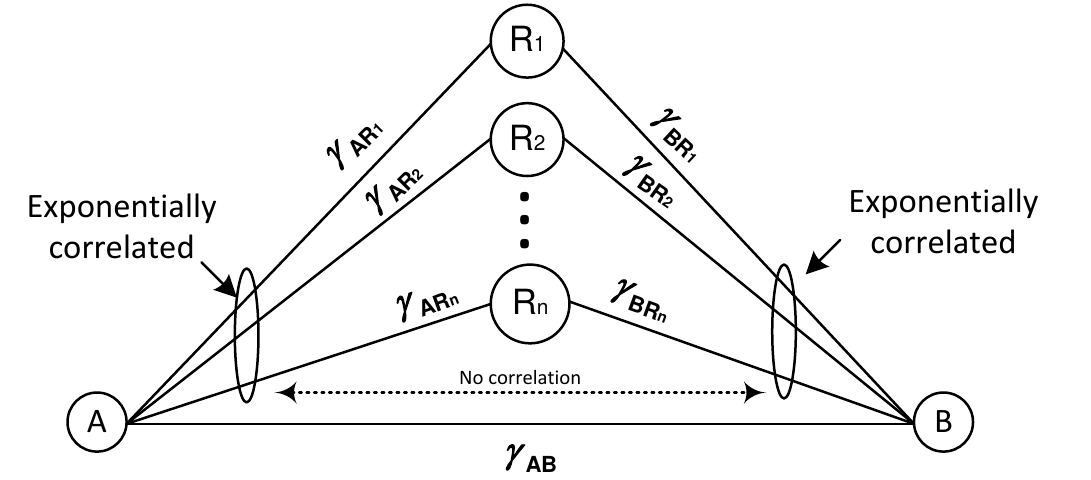}
\caption{System Model}\label{f1}
\end{figure}

With regard to the channel coefficients, the fast fading ergodicity allows the calculation of its mean value from a sufficiently long sample realization of the process (e.g., data packet duration). On the other hand, shadowing is a slowly varying procedure and, thus, it can be considered unaltered for the same or even larger period of time \cite{thesismary}. In our work, we assume that shadowing remains constant during a communication round, which consists of the direct transmission and the cooperation phase. Therefore, since the analysis is performed in packet-level, the average received power computed over the duration of one packet may be written as $\bar{\gamma}_{ij}$=$\mathbf{E}\left[\frac{P_{Tx}}{d_{ij}^a}\left|h_{s_{ij}}\right|^2\left|h_{f_{ij}}\right|^2\right]$ = $\frac{P_{Tx}}{d_{ij}^a}\left|h_{s_{ij}}\right|^2$. According to several experimental studies (e.g., \cite{5288484}), $h_{s_{ij}}$ and, consequently, $\bar{\gamma}_{ij}$ can be modeled as a log-normal RV, which implies that $\bar{\gamma}_{ij_{dB}} = 10 log_{10}\left(\bar{\gamma}_{ij}\right)$ is a normally distributed RV with mean value $\mu_{ij_{dB}}$ and standard deviation $\sigma_{ij_{dB}}$\footnote{In the rest of the paper, for the sake of clarity and without loss of generality, the values of $\gamma, \mu, \sigma$ are always expressed in dB.}.

Regarding the correlation model, we denote by $\rho_{1_{x,y}}$ the correlation factor between two links $AR_{x}$ and $AR_{y}$, and by $\rho_{2_{x,y}}$ the correlation factor between two links $BR_{x}$ and $BR_{y}$, respectively. On the other hand, no correlation is assumed between $AR_{x}$ and $BR_{y}$ links, i.e., $\rho(\bar{\gamma}_{AR_{x}},\bar{\gamma}_{BR_{y}}) = 0, \forall\ x,y$. The correlation factors $\rho_{1_{x,y}}$ and $\rho_{2_{x,y}}$ can be estimated as:

\begin{equation}
\rho_{1_{x,y}}=\rho\left(\bar{\gamma}_{AR_{x}},\bar{\gamma}_{AR_{y}}\right)=\mathbf{E}\left[\left(\bar{\gamma}_{AR_{x}} - \mu_{AR_{x}}\right)\left(\bar{\gamma}_{AR_{y}} - \mu_{AR_{y}}\right)\right]/\sigma_{AR_{x}}\sigma_{AR_{y}}, \forall x,y\in [1,n]
\end{equation}
\begin{equation}
\rho_{2_{x,y}}=\rho\left(\bar{\gamma}_{BR_{x}},\bar{\gamma}_{BR_{y}}\right)=\mathbf{E}\left[\left(\bar{\gamma}_{BR_{x}} - \mu_{BR_{x}}\right)\left(\bar{\gamma}_{BR_{y}} - \mu_{BR_{y}}\right)\right]/\sigma_{BR_{x}}\sigma_{BR_{y}}, \forall x,y\in [1,n].
\end{equation}

In addition, taking into account that the links to each direction have a common end point, we assume that the correlation between any pair of links $\rho_{1_{x,y}}$ or $\rho_{2_{x,y}}$  decreases exponentially as the distance between them increases, i.e., $\rho_{x,y} = \rho^{\left|x-y\right|}$ where $\rho \in [0,1]$\cite{104090}. To that end, a set of exponentially correlated normal RVs $\boldsymbol{\gamma}_{AR_{x}}=\left[\bar{\gamma}_{AR_{1}}, \ldots, \bar{\gamma}_{AR_{n}}\right]$ can be generated as:

\begingroup
\begin{equation}
\boldsymbol{\gamma}_{AR_{x}} = \mathbf{\sigma}_1 \left(\mathbf{\Sigma_n}\left(\rho_1\right)\right)^{1/2}\textbf{X}_{n\times 1}+\mathbf{\mu}_1, \label{eq1}
\end{equation}
\endgroup

\noindent where $\textbf{X}_{n\times 1} = \left[X_1,\ldots,X_n\right]^T$, with $X_i\sim \mathcal{N}(0,1)$, $\mathbf{\mu}_1 = \left[\mu_{AR_{1}},\ldots,\mu_{AR_{n}}\right]^T$, $\mathbf{\sigma}_1$ is a diagonal matrix that contains the $\sigma_{AR_{i}}$ values in its main diagonal, i.e., $\mathbf{\sigma}_1 = diag\{\sigma_{AR_{1}},\ldots,\sigma_{AR_{n}}\}$, while $\mathbf{\Sigma_n}\left(\rho_1\right)$ can be expressed as a Toeplitz matrix\footnote{This matrix is also known as Kac-Murdock-Szeg\"{o} matrix \cite{toeplitz}.}, whose entries depend on the correlation factor $\rho_1$:

\begingroup
\begin{equation}
\mathbf{\Sigma_n}\left(\rho_1\right) = \left[\begin{array}{ccccc}
1&\rho_1&\rho_1^2&\cdots&\rho_1^n\\
\rho_1&1&\rho_1&\cdots&\rho_1^{n-1}\\
\rho_1^2&\rho_1&1&\cdots&\rho_1^{n-2}\\
\vdots&\ddots&\ddots&\ddots&\vdots\\
\rho_1^n&\rho_1^{n-1}&\cdots&\rho_1&1
\end{array}\right].
\end{equation}
\endgroup

\noindent Accordingly, the exponentially correlated normal RVs $\boldsymbol{\gamma}_{BR_{x}}=\left[\bar{\gamma}_{BR_{1}}, \ldots, \bar{\gamma}_{BR_{n}}\right]$ can be generated as:

\begingroup
\begin{equation}
\boldsymbol{\gamma}_{BR_{x}} = \mathbf{\sigma}_2 \left(\mathbf{\Sigma_n}^{1/2}\left(\rho_2\right)\right)\textbf{Y}_{n\times 1}+\mathbf{\mu}_2, \label{eq2}
\end{equation}
\endgroup

\noindent where $\textbf{Y}_{n\times 1} = \left[Y_1,\ldots,Y_n\right]^T$, with $Y_i\sim \mathcal{N}(0,1)$, $\mathbf{\mu}_2 = \left[\mu_{BR_{1}},\ldots,\mu_{BR_{n}}\right]^T$, $\mathbf{\sigma}_2=diag\left\{\sigma_{BR_{1}},\ldots,\sigma_{BR_{n}}\right\}$ and $\mathbf{\Sigma_n}\left(\rho_2\right)$ is a Toeplitz matrix, function of the correlation factor $\rho_2$.

We further assume that node $B$ is marginally located in the transmission range of node $A$ (and vice versa), which implies a weak direct link with relatively low $\bar{\gamma}_{AB}$. However, the erroneous direct transmissions are compensated by employing network cooperation through ARQ control mechanisms.

\subsection{Packet Acceptance Criteria}
\label{sec:criteria}

In wireless networks, different applications (e.g., video, gaming, e-mail, etc.) require different levels of QoS, which can be provisioned through a target Packet Error Rate (PER) denoted by the probability $p^*$. Therefore, metrics such as the Average PER (APER) or the Outage PER (OPER) have to be employed in order to determine the correct reception of a packet according to the target value of $p^*$. In our case, a given relay should receive correct packets by both $A$ and $B$ in order to be able to apply NC and participate in the cooperation phase. As a result, the realistic channel conditions affect: i) the size of the active relay set ($\mathcal{A}_n$), which is composed of the relays that successfully receive packets from both end nodes, and ii) the network outage probability ($p_{out}$), defined as the probability that none of the available $n$ relays in the system receives both packets successfully, as it is assumed that the shadowing coefficients remain constant during one communication round.

In this point, let us focus on the metrics that are used to verify the correct packet reception under fast and slow fading conditions. In environments where shadowing is not considered, the ergodicity of fast fading allows the utilization of average metrics, such as the APER, to characterize the system performance and determine the acceptance of a packet. Thus, under fast fading conditions for a given PHY layer set up, the APER between two nodes $i$ and $j$ increases monotonically with their distance, i.e., $APER_{ij} = f(d_{ij})$ \cite{5288484}.

On the other hand, the criterion of correct packet reception is substantially modified in the presence of slow fading, which is a non-ergodic process. As we have seen in Section~\ref{sec:channel}, the received power $\bar{\gamma}_{ij}$ eventually depends only on the $h_{s_{ij}}$ coefficient, since $h_{f_{ij}}$ can be averaged because of its ergodicity. This implies that, in slow fading environments, the APER is a function of distance and shadowing (i.e., $APER_{ij} = f(d_{ij},h_{s_{ij}})$), and the QoS requirement $APER_{ij}\leq p^*$ is equivalent to $\bar{\gamma}_{ij}>\gamma^*$ \cite{5288484}. However, although shadowing is a non-ergodic process, $\bar{\gamma}_{ij}$ is still an RV that requires statistical characterization. In this case, the most suitable metric is the OPER (i.e., the probability of receiving an erroneous packet) and the normal distribution of $\bar{\gamma}_{ij}$ (in dB) allows us to express it as:

\begin{equation}
OPER_{ij}=Pr\left\{APER_{ij}> p^*\right\}=Pr\left\{\bar{\gamma}_{{ij}}\leq \gamma^*\right\}=1-Q\left(\frac{\gamma^*-\mu_{{ij}}}{\sigma_{{ij}}}\right),
\end{equation}
where $Q\left(\cdot\right)$ is the standard one dimensional Gaussian Q-function, traditionally defined by $Q\left(x\right) = \int^{\infty}_{x}\frac{1}{\sqrt{2\pi}}e^{\frac{-t^2}{2}}dt$. The above expression suggests as sufficient and necessary condition for the packet acceptance that the mean received power should be above the threshold value $\gamma^*$. Mary et al. \cite{5288484} have provided closed-form formulas for $\gamma^*$ as a function of a target symbol error probability set by the application layer for log-normal shadowing and Nakagami-m wireless channels.

\section{NCCARQ Overview and PHY Layer Impact}
\label{sec:impact}

The goal of this section is to highlight the impact of realistic PHY layer on the performance of NC-aided MAC protocols. To that end, we use as a representative case study the NCCARQ MAC protocol \cite{nccarq}, which coordinates the channel access among a set of NC-capable relay nodes in a bidirectional wireless communication. In the following sections, we briefly review the protocol's operation and we explicitly study the changes due to the realistic PHY layer consideration.

\subsection{NCCARQ Overview}
\label{sec:overview}

NCCARQ \cite{nccarq} MAC protocol has been designed to exploit the benefits of both ARQ and NC in two-way cooperative wireless networks, being backwards compatible with the Distributed Coordination Function (DCF) of the IEEE 802.11 Standard \cite{80211}. The function of the protocol is based on two main factors: i) the broadcast nature of wireless communications, which enables the cooperation between the mobile nodes, and ii) the capability of the intermediate relay nodes to perform NC before any transmission.

Fig. \ref{nccmac} presents an example of the frame sequence in NCCARQ, where two end nodes ($A$ and $B$) want to exchange their data packets (a and b, respectively) with the assistance of three NC-capable relay nodes ($R_1,R_2,R_3$). In this particular example, the protocol operates as follows:

\begin{itemize}
  \item Node A transmits packet a to node B. The relays overhear the transmission correctly, while we assume that node B fails to demodulate the received packet.
  \item Node B triggers the cooperation phase by broadcasting a Request For Cooperation (RFC) control packet. In addition, unlike conventional cooperative ARQ protocols, NCCARQ allows piggyback data transmissions along with the RFC (in this example, the data packet b), thus leveraging the NC application.
  \item After the reception of the RFC and since we assume ideal channel conditions, the relays apply NC to the two data packets (a and b) and set up their backoff counters according to the DCF rules in order to gain channel access and transmit the NC packet (a$\oplus$b) to the end nodes.
  \item In this example, we assume that the three relays select the values $R_1=2$, $R_2=2$ and $R_3=3$ for their backoff counters, respectively. As a result, after two time slots, $R_1$ and $R_2$ attempt a concurrent transmission and $R_3$ freezes its counter.
  \item The simultaneous packet transmission results in a collision and, according to the DCF rules, the two relays reset their backoff counters to $R_1=5$ and $R_2=12$, respectively, while $R_3=1$. Therefore, after one time slot, $R_3$ transmits the coded packet and the two destinations sequentially broadcast acknowledgment (ACK) packets, terminating the cooperation phase.
\end{itemize}

\begin{figure}[htb]
\centering
\includegraphics[width=1\columnwidth]{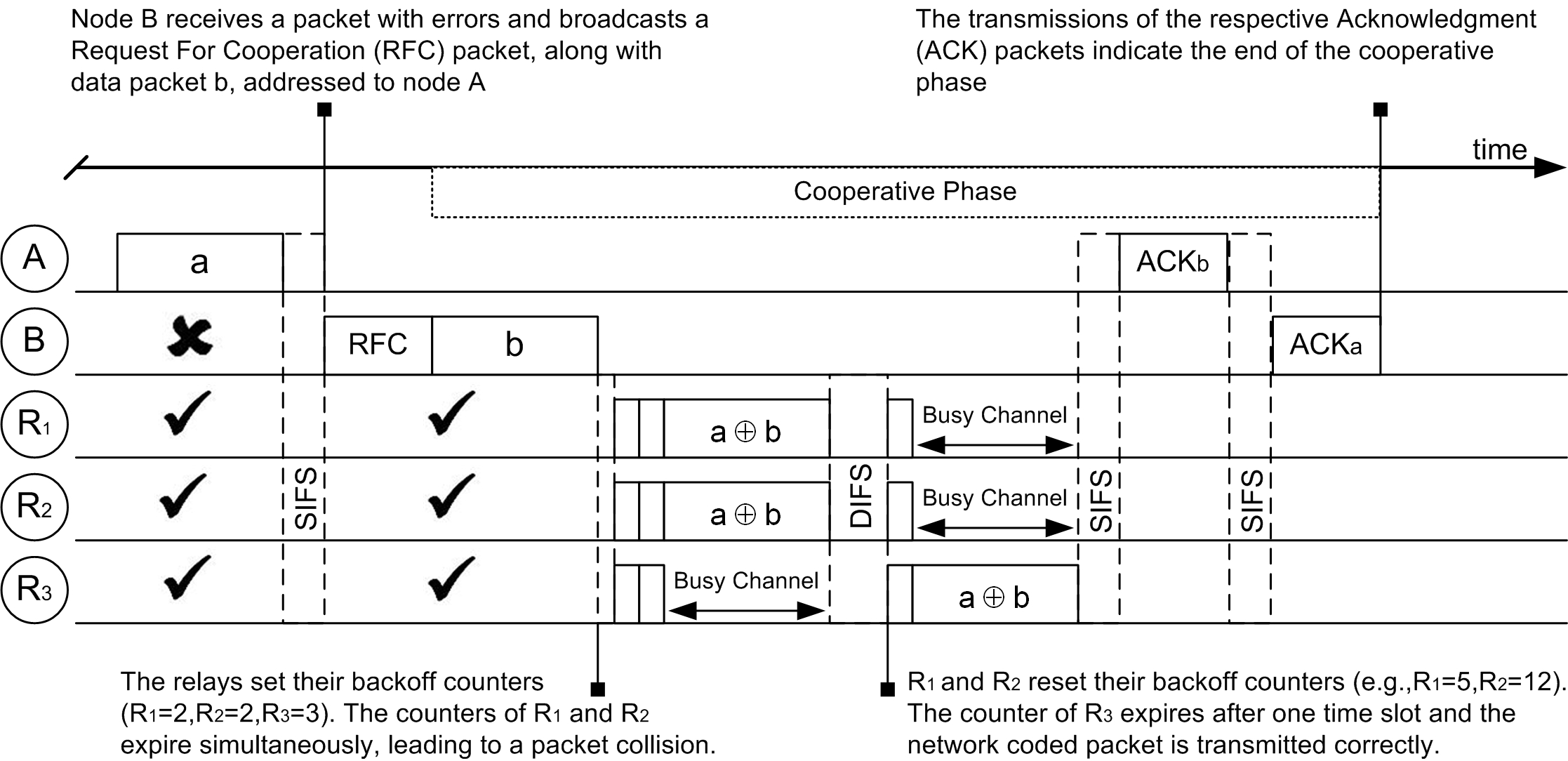}
\caption{NCCARQ operation without PHY layer consideration}\label{nccmac}
\end{figure}

Hence, the participation of multiple nodes in the contention phase results in idle slots and collisions in the network, before eventually a relay node manages to successfully transmit the coded packet. However, apart from the collisions and the idle periods, the protocol performance may be also degraded due to fading (either fast or slow) introduced by taking into account non-ideal channel conditions. In the next section, we provide some insights for the modifications that the realistic PHY layer potentially brings to the protocol operation.

\subsection{PHY Layer Impact}
\label{sec:phy}

The PHY layer consideration significantly modifies the protocol operation, as it is depicted in Fig. \ref{nccphy}. In this case, the protocol operates as follows:
\begin{itemize}
  \item Node A transmits packet a to node B. Node B and $R_3$ fail to demodulate the received packet, while $R_1$ and $R_2$ overhear the transmission correctly.
  \item Node B triggers the cooperation phase by broadcasting an RFC control packet along with data packet b. In this example, we assume that only $R_3$ receives correctly the data packet.
  \item Since $R_1$ and $R_2$ have received only packet a, and $R_3$ has received only packet b, there is no node in the relay set that can apply NC. As a result, after a predefined time ($T_{timeout}$), node A starts a new communication by transmitting packet a, which is correctly received by $R_1$ and $R_3$.
  \item Node B broadcasts again an RFC control packet along with the data packet b, which is correctly received by all the relays.
  \item Consequently, in this communication round, $R_1$ and $R_3$ have correctly received both packets, thus being able to participate in the cooperation phase. Accordingly, they set up their backoff counters to $R_1=2$ and $R_3=3$, respectively, and $R_2$ gains access to the channel after two time slots.
  \item The two destinations receive correctly the coded packet and they are able to extract the original packets a and b, terminating the cooperation phase by transmitting the respective ACK packets.
\end{itemize}

Apparently, the correct packet transmissions define the active relay set ($\mathcal{A}_n$), introducing the concept of a node being in outage. Hence, in the extreme case where no relay node has received both packets from $A$ and $B$, the relay set is in outage and the cooperation phase ends after a predefined time ($T_{timeout}$), which is not considered in systems that operate under ideal channel conditions. On the other hand, the reduction of the active relay set due to non-successful packet receptions could be beneficial in networks with many relays, since a smaller number of active relays would lead to a lower packet collision probability in the network. Hence, the aforementioned issues stress the necessity for designing accurate cross-layer models that consider the protocol operation in realistic conditions.

\begin{figure}[htb]
\centering
\includegraphics[width=1\columnwidth]{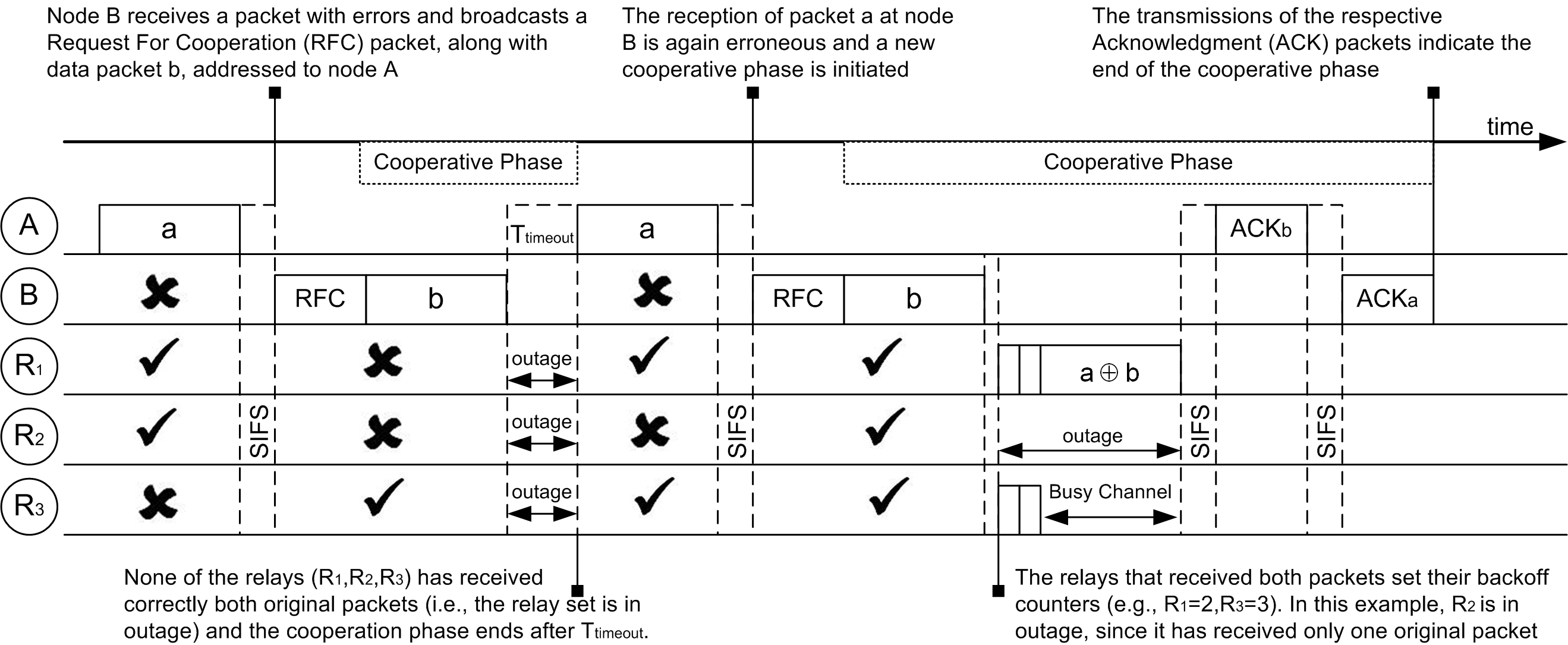}
\caption{NCCARQ operation with PHY layer consideration}\label{nccphy}
\end{figure}

\section{Joint MAC/PHY Analytical Framework}
\label{sec:analysis}

In this section, we introduce a joint MAC/PHY analytical framework to model the throughput and the energy efficiency achieved by NCCARQ under correlated shadowing and fast fading conditions. Although a complete analysis from the MAC layer point of view under ideal channel conditions is presented in \cite{nccarq}, the PHY layer consideration introduces new challenges in the theoretical derivations. In particular, concepts such as the network outage probability ($p_{out}$), the expected size of the active relay set ($\mathbf{E}\left[\left|\mathcal{A}_n\right|\right]$) and the OPER should be explicitly considered for an accurate analytical design. In the remainder of this section, we first focus on the parameters that are affected by the realistic PHY layer assumption and, then, we appropriately incorporate these parameters in a modified analysis from the MAC layer point of view.

\subsection{Physical Layer Impact on $p_{out}$ and $\mathbf{E}\left[\left|\mathcal{A}_n\right|\right]$}
\label{sec3}

The probability of having exactly $k$ active out of $n$ total relays in the system (i.e., $Pr\left\{\left|\mathcal{A}_n\right|=k\right\}$) is a required parameter for the estimation of both the network outage probability and the expected size of the active relay set. To that end, let us define by $\boldsymbol{1}_{AR_i} = \left\{\bar{\gamma}_{{AR_i}}>\gamma^*\right\}$ and $\boldsymbol{0}_{AR_i} = \left\{\bar{\gamma}_{{AR_i}}\leq\gamma^*\right\}$ the events that the relay $i$ receives from node $A$ a ``correct" or an ``erroneous" packet, respectively. In addition, we introduce the notation $\mathfrak{b}_{A\chi_n}$ to identify which of the $n$ relays have correctly received packets transmitted by node $A$. In particular, $\chi\in [0,2^n-1]$ is a natural number, whose value is specified by the combination of accepted and discarded packets in all $AR_i$ links, while $\mathfrak{b}_{A\chi_n}$ corresponds to the $n$-bit representation of $\chi$, where the positions of 1s indicate the specific relays in which the average received power $\bar{\gamma}_{AR_i}$ is above the reliability threshold $\gamma^*$\footnote{For example, $\mathfrak{b}_{A5_3}=[\boldsymbol{1}_{AR_1},\boldsymbol{0}_{AR_2},\boldsymbol{1}_{AR_3}]$ indicates that: i) there are 3 relays in the network ($R_1,R_2,R_3$), and ii) only $R_1$ and $R_3$ have received correct packets from node A.}. Accordingly, $\mathfrak{b}_{B\psi_n}$ identifies which relays have successfully received packets transmitted by node $B$, where $\psi$ has the same characteristics as $\chi$.

Hence, the probability that exactly $k$ out of $n$ relays have successfully received packets from both $A$ and $B$ (i.e., $Pr\left\{\left|\mathcal{A}_n\right|=k\right\}$) may be estimated by taking into account all the possible binary codewords $\mathfrak{b}_{A\chi_n}$ and $\mathfrak{b}_{B\psi_n}$ that satisfy the following condition:

\begin{equation}
H_w\left(\mathfrak{b}_{A\chi_n}\odot \mathfrak{b}_{B\psi_n}\right)=k,
\end{equation}
where $\odot$ denotes the bit wise AND operation, while $H_w\left(\mathfrak{b}\right)$ corresponds to the Hamming weight function that returns the number of 1s in the binary word $\mathfrak{b}$. Denoting as $Pr\{\mathfrak{b}_{A\chi_n}\}$ and $Pr\{\mathfrak{b}_{B\psi_n}\}$ the probability of occurrence of each possible event $\mathfrak{b}_{A\chi_n}$ and $\mathfrak{b}_{B\psi_n}$, respectively, the aforementioned probability is given by:

\vspace{-1pt}

\begingroup
\begin{equation}
Pr\left\{\left|\mathcal{A}_n\right|=k\right\} = \mathop{\sum_{\chi=0}^{2^n-1}\sum_{\psi=0}^{2^n-1}}_{\{H_w\left(\mathfrak{b}_{A\chi_n}\odot \mathfrak{b}_{B\psi_n}\right) = k\}}Pr\{\mathfrak{b}_{A\chi_n}\}Pr\{\mathfrak{b}_{B\psi_n}\}. \label{P-k-n-2}
\end{equation}
\endgroup

\subsubsection{Theoretical Estimation of the Probabilities $Pr\{\mathfrak{b}_{A\chi_n}\}$ and $Pr\{\mathfrak{b}_{B\psi_n}\}$}
In order to further clarify the concepts and derive expressions for the probabilities $Pr\{\mathfrak{b}_{A\chi_n}\}$ and $Pr\{\mathfrak{b}_{B\psi_n}\}$, let us provide an example for the theoretical estimation of $Pr\{\mathfrak{b}_{A1_n}\}$, which corresponds to the probability that only relay $n$ receives a ``correct" packet from node A, while all the other relays (i.e., $R_1,\ldots, R_{n-1}$) receive ``erroneous" packets:

\begingroup
\begin{eqnarray}
&&Pr\{\mathfrak{b}_{A1_n}\} = Pr\left\{\boldsymbol{0}_{AR_1},\boldsymbol{0}_{AR_2},...,\boldsymbol{1}_{AR_n}\right\}\nonumber\\
&&=Pr\left\{\bar{\gamma}_{{AR_1}}\leq \gamma^*,\bar{\gamma}_{{AR_2}}\leq \gamma^*....,\bar{\gamma}_{{AR_n}}> \gamma^*\right\}\nonumber\\
&&=  \int^{\frac{\gamma^*-\mu_{AR_1}}{\sigma_{AR_1}}}_{-\infty}\cdots\int^{\infty}_{\frac{\gamma^*-\mu_{AR_n}}{\sigma_{AR_n}}} f_y\left(y_1, \ldots, y_n\right) dy_1\cdots dy_n,\label{eq:exp-int}
\end{eqnarray}
\endgroup
where $f_y\left(y_1, \ldots, y_n\right)$ corresponds to the joint Probability Density Function (PDF) of the RVs $y_i = \left(\bar{\gamma}_{{AR_1}} -\mu_{AR_1}\right)/ \sigma_{AR_1}$ and can be written as:
\begin{equation}
f_y\left(y_1, \ldots, y_n\right) = f_y\left(\mathbf{y}\right) = [det\left(\mathbf{\Sigma_n}\left(\rho_1\right)\right)]^{-1/2}\left(2\pi\right)^{-n/2}exp\left(-\frac{\mathbf{y}^T\mathbf{\Sigma_n}^{-1}\left(\rho_1\right)\mathbf{y}}{2}\right),\label{jointpdf}
\end{equation}
where $det\left(\mathbf{\Sigma_n}\left(\rho_1\right)\right)$ denotes the determinant of matrix $\mathbf{\Sigma_n}\left(\rho_1\right)$. Taking into account the Toeplitz symmetric structure of $\mathbf{\Sigma_n}\left(\rho_1\right)$, it can be shown that \cite{toeplitz}:
\begin{equation}
\mathbf{\Sigma_n}^{-1}\left(\rho_1\right) = \frac{1}{1-\rho_1^2}\left[\begin{array}{cccccc}
1&-\rho_1&0&\cdots&0&0\\
-\rho_1&1+\rho_1^2&-\rho_1&0&\cdots&0\\
0&-\rho_1&1+\rho^2_1&-\rho_1&0&\cdots\\
\vdots&\ddots&\ddots&\ddots&&\vdots\\
0&0&\cdots&0&-\rho_1&1
\end{array}\right].\label{eq-inv}
\end{equation}
By combining Eq.(\ref{jointpdf}) and (\ref{eq-inv}), the joint PDF $f_y\left(\mathbf{y}\right)$ is written as:
\begingroup
\begin{eqnarray}
f_y\left(\textbf{y}\right) &=& [det\left(\mathbf{\Sigma_n}\left(\rho_1\right)\right)]^{-1/2}\left(2\pi\right)^{-n/2}exp\left(-\frac{\left(y^2_1-2\rho_1 y_1y_2\right)}{2\left(1-\rho_1^2\right)}\right)\nonumber\\
&&\times exp\left(-\frac{ \sum^{n-1}_{i=2} \left(\left(\rho^2_1+1\right)y^2_i-2\rho_1 y_iy_{i+1}\right)}{2\left(1-\rho^2_1\right)}\right)\times exp\left(-\frac{ y^2_n}{2\left(1-\rho^2_1\right)}\right).\label{jointpdf2}
\end{eqnarray}
\endgroup
Therefore, the tridiagonal structure of $\mathbf{\Sigma_n}^{-1}\left(\rho_1\right)$ simplifies the theoretical estimation of the probability $Pr\{\mathfrak{b}_{A1_n}\}$, since the multiple integral in Eq.(\ref{eq:exp-int}) can be estimated by iteratively evaluating single integrals. More specifically, by substituting Eq.\eqref{jointpdf2} in Eq.\eqref{eq:exp-int}, the probability $Pr\{\mathfrak{b}_{A1_n}\}$ can be written as:
\begingroup
\begin{eqnarray}
&&Pr\{\mathfrak{b}_{A1_n}\} = C_0\int^{\infty}_{\frac{\gamma^*-\mu_{AR_n}}{\sigma_{AR_n}}}exp\left(-\frac{ y^2_n}{2\left(1-\rho_1^2\right)}\right)q_{n-1}\left(y_n\right)dy_n,\label{eq:exp-int2}
\end{eqnarray}
\endgroup
where $C_0 = [det\left(\mathbf{\Sigma_n}\left(\rho_1\right)\right)]^{-1/2}\left(2\pi\right)^{-n/2}$ and

\begingroup
\begin{eqnarray}
q_{k}\left(x\right) & = & \int^{\frac{\gamma^*-\mu_{AR_k}}{\sigma_{AR_k}}}_{-\infty}exp\left(-\frac{\left(\rho_1^2+1\right)y^2_k-2\rho_1 y_kx}{2\left(1-\rho_1^2\right)}\right)q_{k-1}(y_k)dy_k,\ k \in [2,n-1]\label{eq:qk}\\
q_1\left(x\right)   & = & \int^{\frac{\gamma^*-\mu_{AR_1}}{\sigma_{AR_1}}}_{-\infty}exp\left(-\frac{y^2_1-2\rho_1 y_1x}{2\left(1-\rho_1^2\right)}\right) dy_1.
\end{eqnarray}
\endgroup

\noindent In order to provide a closed-form expression for $q_1\left(x\right)$, we apply Eq.\cite[(15.74)]{book2}:

\begingroup
\begin{eqnarray}
\int^{\infty}_{0}exp\left(-\left(ax^2+bx+c\right)\right)dx &=&\sqrt{\frac{\pi}{a}}exp\left(\frac{b^2-4ac}{4a}\right)Q\left(\frac{b}{\sqrt{a}}\right).\label{closedform}
\end{eqnarray}
\endgroup
Hence, setting $t= \frac{\gamma^*-\mu_{AR_1}}{\sigma_{AR_1}}$, $a = 1/2(1-\rho_1^2)$, $b=(2\rho_1 x-2 t)/2(1-\rho_1^2)$ and $c=(t^2-2\rho_1 tx)/2(1-\rho_1^2)$, the integral $q_1\left(x\right)$ can be written as:

\begingroup

\begin{eqnarray}
q_1(x) & = & 
\sqrt{2\pi\left(1-\rho^2\right)} Q\left(\frac{\rho_1 x-t}{\sqrt{\left(1-\rho_1^2\right)}}\right)exp\left(\frac{\rho_1^2x^2}{2(1-\rho_1^2)}\right).
\end{eqnarray}
\endgroup

For the evaluation of the rest integrals $q_k\left(x\right)$, $\forall k \in [2,n-1]$, we adopt the Gaussian quadratures for the integral $\int^{\infty}_{0}exp\left(-x^2\right)f(x)dx$ \cite[Table II, N=15]{1969method}. Making changes in the variables\footnote{The detailed derivation is provided in Appendix \ref{a4}}, Eq.(\ref{eq:qk}) may be rewritten in the form:

\begingroup
\begin{eqnarray}
q_k(x) & = & \sqrt{\frac{2\left(1-\rho_1^2\right)}{1+\rho_1^2}}exp\left(-\frac{\left(\rho_1^2+1\right)t^2-2\rho_1 xt}{2\left(1-\rho_1^2\right)}\right) \sum^{N_{GQR}}_{i=1}w_i \nonumber\\
			 && \times exp\left(-\frac{\left(2\rho_1 x - 2t\left(\rho_1^2+1\right)\right)r_i}{\sqrt{2\left(1-\rho_1^4\right)}}\right)q_{k-1}\left(-\sqrt{\frac{2\left(1-\rho_1^2\right)}{1+\rho_1^2}}r_i+t\right)\label{eq:qk1},
\end{eqnarray}
\endgroup

\noindent where $t=\frac{\gamma^*-\mu_{AR_k}}{\sigma_{AR_k}}$, $w_i$ and $r_i$ denote the weights and the roots of the Gaussian quadratures\cite{1969method}, respectively, and $N_{GQR}$ is the number of points used for the integral evaluation. After evaluating $q_k(x)$ at the points $x_i = -\sqrt{\frac{2\left(1-\rho^2_1\right)}{1+\rho^2_1}}r_i+t$, $\forall k \in [2,n-1]$, the probability $Pr\{\mathfrak{b}_{A1_n}\}$ may be computed as:

\begingroup
\begin{eqnarray}
Pr\{\mathfrak{b}_{A1_n}\} & = & [det\left(\mathbf{\Sigma_n}(\rho_1)\right)]^{-1/2}\left(2\pi\right)^{-N/2}\sum^{N_{GQR}}_{i=1}w_iexp\left(-\frac{t^2 - 2\sqrt{2\left(1-\rho_1^2\right)}tr_i}{2\left(1-\rho_1^2\right)}\right)\nonumber\\
&& \times q_{N-1}\left(t-\sqrt{2\left(1-\rho_1^2\right)}r_i\right),
\end{eqnarray}
\endgroup

\noindent where $t=\frac{\gamma^*-\mu_{AR_n}}{\sigma_{AR_n}}$.

Following the same line of thought, the above procedure can be generalized for the theoretical estimation of $Pr\{\mathfrak{b}_{A\chi_n}\}$, $Pr\{\mathfrak{b}_{B\psi_n}\}$ $\forall$ $\chi,\psi\in \left[0,2^n-1\right]$, as it is described in Appendix \ref{a1}.

\subsubsection{Network Outage Probability ($p_{out}$)}

The network outage probability $p_{out}$, i.e., the probability that none of the relays in the system has successfully received both packets from nodes $A$ and $B$, may be directly derived from Eq.(\ref{P-k-n-2}) by setting $k=0$. Therefore:

\begingroup
\small
\begin{equation}
p_{out} = \mathop{\sum_{\chi=0}^{2^n-1}\sum_{\psi=0}^{2^n-1}}_{\{H_w\left(\mathfrak{b}_{A\chi_n}\odot \mathfrak{b}_{B\psi_n}\right) = 0\}}Pr\{\mathfrak{b}_{A\chi_n}\}Pr\{\mathfrak{b}_{B\psi_n}\}, \label{P-out}
\end{equation}
\endgroup

\noindent where $Pr\{\mathfrak{b}_{A\chi_n}\}$, $Pr\{\mathfrak{b}_{B\psi_n}\}$ are computed as described in Appendix \ref{a1}.

\vspace{-1pt}

\subsubsection{Expected Size of the Active Relay Set ($\mathbf{E}\left[\left|\mathcal{A}_n\right|\right]$)}
\label{subsec}

In this section, we provide a closed-form expression to compute the average number of active relays $\mathbf{E}\left[\left|\mathcal{A}_n\right|\right]$, proving that it is independent of the correlation coefficients $\rho_1, \rho_2$. Following the induction method, we initially prove the aforementioned statement for a network with 2 relays. By applying Eq.(\ref{P-k-n-2}) for $n=2$, we may write the probabilities that $k$ relays are active for $k=1,2$ as:

\vspace{-2pt}
\begingroup
{\setlength{\arraycolsep}{0em}\begin{eqnarray}
Pr\left\{\left|\mathcal{A}_2\right|=2\right\} &=& Pr\{\mathfrak{b}_{A3_2}\} Pr\{\mathfrak{b}_{B3_2}\} \label{p-2-2}\\
Pr\left\{\left|\mathcal{A}_2\right|=1\right\} &=& Pr\{\mathfrak{b}_{A1_2}\}\left(Pr\{\mathfrak{b}_{B1_2}\}+Pr\{\mathfrak{b}_{B3_2}\}\right)\nonumber\\
&&+Pr\{\mathfrak{b}_{A2_2}\}\left(Pr\{\mathfrak{b}_{B2_2}\}+ Pr\{\mathfrak{b}_{B3_2}\}\right)\nonumber\\
&&+Pr\{\mathfrak{b}_{A3_2}\}\left(Pr\{\mathfrak{b}_{B1_2}\}+Pr\{\mathfrak{b}_{B2_2}\}\right), \label{p-1-2}
\end{eqnarray}}
\endgroup
\vspace{-2pt}

\noindent where the probabilities $ Pr\{\mathfrak{b}_{A\chi_2}\}$, $Pr\{\mathfrak{b}_{B\psi_2}\}$, $\chi,\psi\in [0,3]$ may be written as in Eq.(\ref{eq:exp-int}) by setting $n=2$. The average number of active relays in case of $n=2$ may be written as follows:

\begingroup
\begin{eqnarray}
\label{eq:E-2}
\mathbf{E}\left[\left|\mathcal{A}_2\right|\right] &=& \sum^2_{i=1} iPr\left\{\left|\mathcal{A}_2\right|=i\right\}.
\end{eqnarray}
\endgroup

\vspace{-1pt}

By taking into account the Eq.(\ref{eq:exp-int}), (\ref{closedform}) and Lemma 1 below, it can be shown\footnote{The detailed derivation is provided in Appendix \ref{a5}} that Eq.(\ref{eq:E-2}) can be written in closed-form as:
\vspace{-1pt}
\begingroup
\begin{eqnarray}
\mathbf{E}\left[\left|\mathcal{A}_2\right|\right] &=& \sum^2_{i=1} Q\left(\frac{\gamma^* - \mu_{ARi}}{\sigma_{ARi}}\right)Q\left(\frac{\gamma^* - \mu_{{BRi}}}{\sigma_{{BRi}}}\right)\label{eq:average2}.
\end{eqnarray}
\endgroup

\begin{lemma}
For any given $\rho$, $\gamma^*$, $\mu$, $\sigma$ it holds that:

\begingroup
\begin{eqnarray}
\frac{1}{\sqrt{2\pi}}\int^{\infty}_{-\infty}Q\left(\frac{\gamma^* - \mu-\sigma \sqrt{\rho} t}{\sigma\sqrt{1-\rho}}\right)e^{-t^2/2} dt
=Q\left(\frac{\gamma^* - \mu}{\sigma}\right)
\end{eqnarray}
\endgroup
\end{lemma}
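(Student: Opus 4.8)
The plan is to recognize the left-hand side as an expectation that can be collapsed by the standard ``conditioning on a jointly Gaussian pair'' identity. Concretely, introduce two independent standard normal random variables $T$ and $Z$, and set $W = \sqrt{\rho}\,T + \sqrt{1-\rho}\,Z$. Then $W \sim \mathcal{N}(0,1)$, and moreover $\sigma W + \mu$ has the same distribution as $\bar{\gamma}$, i.e.\ mean $\mu$ and variance $\sigma^2$. The first step is to rewrite the $Q$-function inside the integral as a conditional probability: $Q\!\left(\frac{\gamma^* - \mu - \sigma\sqrt{\rho}\,t}{\sigma\sqrt{1-\rho}}\right) = \Pr\{\sigma\sqrt{1-\rho}\,Z > \gamma^* - \mu - \sigma\sqrt{\rho}\,t\} = \Pr\{\sigma W + \mu > \gamma^* \mid T = t\}$, using $Z \sim \mathcal{N}(0,1)$ independent of $T$ and the definition of $W$.

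Next I would integrate over $t$ against the standard normal density $\frac{1}{\sqrt{2\pi}}e^{-t^2/2}$. By the law of total probability (tower property of conditional expectation), $\frac{1}{\sqrt{2\pi}}\int_{-\infty}^{\infty}\Pr\{\sigma W + \mu > \gamma^* \mid T = t\}\,e^{-t^2/2}\,dt = \Pr\{\sigma W + \mu > \gamma^*\}$. Since $W$ is standard normal, this last probability is exactly $\Pr\{W > \frac{\gamma^* - \mu}{\sigma}\} = Q\!\left(\frac{\gamma^* - \mu}{\sigma}\right)$, which is the claimed right-hand side. This completes the argument.

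An alternative, purely analytic route avoids probabilistic language: write $Q(u) = \frac{1}{\sqrt{2\pi}}\int_u^{\infty} e^{-s^2/2}\,ds$, substitute into the left-hand side, and one obtains a double Gaussian integral over the region $\{\sigma\sqrt{1-\rho}\,s + \sigma\sqrt{\rho}\,t > \gamma^* - \mu\}$ in the $(s,t)$-plane with the standard bivariate density $\frac{1}{2\pi}e^{-(s^2+t^2)/2}$. Because $(s,t)\mapsto \sqrt{1-\rho}\,s + \sqrt{\rho}\,t$ is an orthogonal projection of unit norm, a rotation of coordinates turns this into a single Gaussian integral over a half-line, yielding $Q\!\left(\frac{\gamma^*-\mu}{\sigma}\right)$. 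The only place requiring a little care is checking the normalization $(\sqrt{1-\rho})^2 + (\sqrt{\rho})^2 = 1$ so that the rotation is measure-preserving and the resulting variance is $\sigma^2$; everything else is bookkeeping. I expect the main (minor) obstacle to be purely presentational — making the change of variables clean — rather than conceptual, since the identity is just the statement that marginalizing a jointly Gaussian vector over one coordinate recovers a Gaussian marginal.
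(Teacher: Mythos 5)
Your proof is correct and follows essentially the same route as the paper's: both decompose an $\mathcal{N}(\mu,\sigma^2)$ variable as $\mu+\sigma\sqrt{\rho}\,X_1+\sigma\sqrt{1-\rho}\,X_2$ with $X_1,X_2$ independent standard normals and then marginalize over the first component (the paper writes this as the convolution formula for the CDF of a sum, which is exactly your tower-property step). Your secondary rotation-of-coordinates argument is a valid alternative phrasing of the same fact, but no new idea is needed.
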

\begin{proof}
The proof of Lemma 1 is given in Appendix \ref{a2}.
\end{proof}

\noindent The generalization of this result for a network with $n$ relays may be stated as follows:

\begin{proposition}
Let us consider the cooperative network of Fig. \ref{f1} operating in a correlated shadowing environment. The average number of active relays $E\left[\left|\mathcal{A}_n\right|\right]$ is independent of the correlation between the links and it is given by:

\begingroup
\begin{equation}
\mathbf{E}\left[\left|\mathcal{A}_n\right|\right] {=} \sum^n_{i=1} Q\left(\frac{\gamma^* - \mu_{ARi}}{\sigma_{ARi}}\right)Q\left(\frac{\gamma^* - \mu_{{BRi}}}{\sigma_{{BRi}}}\right).
\label{eq:Average}
\end{equation}
\endgroup
\end{proposition}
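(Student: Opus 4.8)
The plan is to sidestep the combinatorial sum of Eq.~(\ref{P-k-n-2}) entirely and rely on linearity of expectation, packaged as an induction on $n$ so that the $n=2$ computation already in the text serves as the base case. Write $\left|\mathcal{A}_n\right| = \sum_{i=1}^{n} Z_i$, where $Z_i \in \{0,1\}$ is the indicator of the event $\left\{\bar{\gamma}_{AR_i} > \gamma^*\right\} \cap \left\{\bar{\gamma}_{BR_i} > \gamma^*\right\}$ that relay $i$ decodes correctly from both $A$ and $B$. The base case $n=2$ is Eq.~(\ref{eq:average2}). For the inductive step, assume Eq.~(\ref{eq:Average}) holds with $n-1$ in place of $n$, and split $\left|\mathcal{A}_n\right| = \left|\mathcal{A}_{n-1}'\right| + Z_n$, where $\left|\mathcal{A}_{n-1}'\right| = \sum_{i=1}^{n-1} Z_i$ is the active count restricted to $R_1,\ldots,R_{n-1}$; by linearity, $\mathbf{E}\left[\left|\mathcal{A}_n\right|\right] = \mathbf{E}\left[\left|\mathcal{A}_{n-1}'\right|\right] + \mathbf{E}\left[Z_n\right]$.

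For the first term, I would argue that $\left|\mathcal{A}_{n-1}'\right|$ has exactly the law of the active count in an $(n-1)$-relay network. Its distribution depends only on the joint distributions of $(\bar{\gamma}_{AR_1},\ldots,\bar{\gamma}_{AR_{n-1}})$ and of $(\bar{\gamma}_{BR_1},\ldots,\bar{\gamma}_{BR_{n-1}})$; marginalising the $n$-variate normal PDF of Eq.~(\ref{jointpdf}) over the last coordinate yields the $(n-1)$-variate normal whose covariance is the leading principal submatrix of $\mathbf{\Sigma_n}\left(\rho_1\right)$, which is precisely $\mathbf{\Sigma_{n-1}}\left(\rho_1\right)$ by the Kac-Murdock-Szeg\"{o} nesting --- i.e., the correlated-shadowing model for $n-1$ relays. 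The single-variable integration performing this marginalisation is an instance of Lemma~1, with the value of its parameter $\rho$ dictated by the conditional variance read off from the tridiagonal inverse in Eq.~(\ref{eq-inv}). Hence the inductive hypothesis gives $\mathbf{E}\left[\left|\mathcal{A}_{n-1}'\right|\right] = \sum_{i=1}^{n-1} Q\!\left(\frac{\gamma^* - \mu_{ARi}}{\sigma_{ARi}}\right) Q\!\left(\frac{\gamma^* - \mu_{BRi}}{\sigma_{BRi}}\right)$, already independent of $\rho_1,\rho_2$. For the second term, since $\boldsymbol{\gamma}_{AR_x}$ and $\boldsymbol{\gamma}_{BR_x}$ are generated from the independent standard normal vectors $\textbf{X}$ and $\textbf{Y}$ (equivalently $\rho(\bar{\gamma}_{AR_x},\bar{\gamma}_{BR_y})=0$ for all $x,y$), the events $\left\{\bar{\gamma}_{AR_n}>\gamma^*\right\}$ and $\left\{\bar{\gamma}_{BR_n}>\gamma^*\right\}$ are independent, so $\mathbf{E}\left[Z_n\right] = Pr\left\{\bar{\gamma}_{AR_n}>\gamma^*\right\}Pr\left\{\bar{\gamma}_{BR_n}>\gamma^*\right\}$; each factor is a one-dimensional Gaussian tail because the marginal of $\bar{\gamma}_{AR_n}$ is $\mathcal{N}(\mu_{AR_n},\sigma_{AR_n}^2)$ independently of $\rho_1$ (its variance equals the $(n,n)$ entry of $\mathbf{\Sigma_n}(\rho_1)$, namely $1$, scaled by $\sigma_{AR_n}^2$), giving $Q\!\left(\frac{\gamma^* - \mu_{AR_n}}{\sigma_{AR_n}}\right)$ and likewise for the $B$ side. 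Adding the two terms reproduces Eq.~(\ref{eq:Average}) and exhibits it as correlation-free, closing the induction.

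The routine algebra here is negligible; the single step that requires care --- and the place where the correlation genuinely drops out --- is the marginalisation claim, namely that deleting relay $n$ from the $n$-variate correlated model returns the $(n-1)$-variate correlated model with the \emph{same} $\rho$, which is exactly where the Toeplitz / Kac-Murdock-Szeg\"{o} structure of $\mathbf{\Sigma_n}$ and Lemma~1 do the work. A shorter alternative that bypasses the induction is to invoke linearity directly: $\mathbf{E}\left[\left|\mathcal{A}_n\right|\right] = \sum_{i=1}^{n} Pr\left\{\bar{\gamma}_{AR_i}>\gamma^*\right\}Pr\left\{\bar{\gamma}_{BR_i}>\gamma^*\right\}$, each summand depending only on the correlation-independent marginals $\mathcal{N}(\mu_{ARi},\sigma_{ARi}^2)$ and $\mathcal{N}(\mu_{BRi},\sigma_{BRi}^2)$; I keep the inductive phrasing only to parallel the $n=2$ derivation already presented.
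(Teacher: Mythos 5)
Your proof is correct and follows essentially the same route as the paper's: an induction on the number of relays whose inductive step is $\mathbf{E}\left[\left|\mathcal{A}_{k+1}\right|\right] = \mathbf{E}\left[\left|\mathcal{A}_{k}\right|\right] + p_{B_{k+1}}$ with $p_{B_{k+1}} = Q\left(\frac{\gamma^*-\mu_{AR_{k+1}}}{\sigma_{AR_{k+1}}}\right)Q\left(\frac{\gamma^*-\mu_{BR_{k+1}}}{\sigma_{BR_{k+1}}}\right)$. The paper obtains this increment by conditioning the pmf of $\left|\mathcal{A}_{k+1}\right|$ on the event $B_{k+1}$ and reindexing the resulting sums, which is precisely the pmf-level form of the indicator/linearity identity you write down directly. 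Two things you make explicit that the paper leaves implicit are worth keeping: (i) the marginalisation claim --- that the first $k$ relays of the $(k+1)$-relay model are distributed as a $k$-relay model with the \emph{same} $\rho$, because the leading principal submatrix of the Kac--Murdock--Szeg\"{o} matrix $\mathbf{\Sigma}_{k+1}\left(\rho\right)$ is $\mathbf{\Sigma}_{k}\left(\rho\right)$ --- which is what licenses applying the inductive hypothesis; and (ii) the observation that the induction is dispensable, since $\mathbf{E}\left[\left|\mathcal{A}_n\right|\right]=\sum_i \mathbf{E}\left[Z_i\right]$ reduces everything in one line to the correlation-free marginals $\mathcal{N}\left(\mu_{AR_i},\sigma_{AR_i}^2\right)$, $\mathcal{N}\left(\mu_{BR_i},\sigma_{BR_i}^2\right)$ and the assumed $AR$/$BR$ independence. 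One minor attribution quibble: the marginalisation over the last coordinate is not literally an instance of Lemma~1 (which evaluates $\mathbf{E}_{X_0}\left[Q\left(a_i\left(X_0\right)\right)\right]$ in the common-factor representation used for $n=2$); it is the standard fact that a marginal of a multivariate normal is normal with covariance given by the corresponding submatrix. The conclusion is unaffected.
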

\begin{proof}
The proof of Proposition 1 is given in Appendix \ref{a3}.
\end{proof}

Having derived accurate closed-form expressions for crucial network parameters (i.e., the network outage probability and the expected number of active relays), we can now incorporate them into a MAC layer analytical model to study important end-to-end metrics, such as the network throughput and energy efficiency.

\subsection{Analytical Formulation from the MAC Layer Perspective}
\label{sec:analysis_mac}

\subsubsection{Throughput}
\label{sec:throughput}

The network throughput, measured in b/s, is defined as the rate of successful data delivery in a given period of time. Hence, taking into account the protocol operation, the expected total throughput ($S_{total}$) of the network, can be decomposed into the throughput achieved by the direct successful transmissions ($S_d$) and the throughput produced by the relay nodes ($S_{coop}$) during the cooperation phase. This can be mathematically expressed as:

\begin{equation}
\label{eq:Stotal}
 	\mathbf{E}[S_{total}]=\mathbf{E}[S_{d}]+\mathbf{E}[S_{coop}],
\end{equation}

where

\begin{equation}
\label{eq:SD}
 	\mathbf{E}[S_{d}]=(1-OPER_{AB})\cdot \frac{\mathbf{E}[Payload]}{\mathbf{E}[T_d]}
\end{equation}

and

\begin{equation}
\label{eq:Scoop}
 	\mathbf{E}[S_{coop}]=2\cdot OPER_{AB}\cdot (1-p_{out})\cdot \frac{\mathbf{E}[Payload]}{\mathbf{E}[T_{d}]+\mathbf{E}[T_{coop}]}.
\end{equation}

In the above equations, $OPER_{AB}$ corresponds to the $OPER$ in the direct link from node $A$ to $B$, $\mathbf{E}[Payload]$ is the average packet payload and $p_{out}$ denotes the probability that there are no active relays in the network, i.e., all relays are in outage. In addition, $\mathbf{E}[T_d]$ and $\mathbf{E}[T_{coop}]$ represent the average time required for a successful direct transmission and a transmission that takes place via the relays, respectively. Let us also emphasize that NC techniques enable the simultaneous transmission of two data packets and, hence, the coefficient 2 has to be included in Eq.(\ref{eq:Scoop}).

The average time for the direct transmission ($\mathbf{E}[T_d]$) can be estimated by the total packet size (including MAC and PHY headers) and the transmission data rate ($Data\ Tx.Rate$) as:
\begin{equation}
\label{eq:Td}
 	\mathbf{E}[T_d]= \frac{\mathbf{E}[Packet\ Size]}{Data\ Tx.Rate}.
\end{equation}

On the other hand, the term $\mathbf{E}[T_{coop}]$ can be written as the sum of the minimum deterministic default time ($T_{def}$) in the beginning of the cooperation, and the overhead time due to the contention of the relays:

\begin{equation}
\label{eq:Tcoop}
 	\mathbf{E}[T_{coop}]= T_{def}+\mathbf{E}[T_{ovh}].
\end{equation}

The default time, which mainly corresponds to the transmission for the RFC and the data packet b, is equal to:

\begin{equation}
\label{eq:Tdef}
T_{def}=T_{SIFS}+T_{RFC}+T_{b},
\end{equation}
where $T_{RFC}$ and $T_{b}$ denote the transmission time for RFC and data packet $b$, respectively, while $T_{SIFS}$ corresponds to the SIFS duration. On the other hand, the overhead time can be caused due to either the network outage or the contention phase:

\begin{equation}
\label{eq:Tovh}
\mathbf{E}[T_{ovh}]=p_{out}\cdot T_{timeout}+(1-p_{out})\cdot\mathbf{E}[T_{cont}],
\end{equation}
where $p_{out}$ is the network outage probability, $T_{timeout}$ denotes the period of time that all nodes wait in case of no active relay in the network, and $T_{cont}$ represents the total time duration until the correct acknowledgement of both original packets, equal to:

\begin{equation}
\label{eq:Tcont}
 	\mathbf{E}[T_{cont}]=T_{ONC}+T_{DIFS}+\mathbf{E}[T_{C}]+T_{a \oplus b}+2\cdot T_{SIFS}+2\cdot T_{ACK}.
\end{equation}

Eq.(\ref{eq:Tcont}) explicitly considers: i) the expected time required for a coded packet to be transmitted via the relays ($\mathbf{E}[T_{C}]$), taking into account the idle slots and the collision overhead, ii) the overhead time needed to perform NC ($T_{ONC}$), iii) the sensing times $T_{DIFS}$ and $T_{SIFS}$, iv) the transmission time for the NC packet $T_{a \oplus b}$, and v) the transmission time for the ACK packets ($T_{ACK}$).

Since NCCARQ is characterized by backwards compatibility with the IEEE 802.11 Standard, the channel access can be modeled according to the Markov chain introduced in \cite{bianchi}, where the states correspond to the values of the backoff counter and the transition probabilities follow the DCF operation. Hereafter, we provide the slightly modified formulas for the sake of the paper's self-completeness, while the interested reader should be referred to the Appendix of \cite{nccarq} for the detailed protocol analysis. Thus, the average time until a successful transmission is calculated as:

\begin{equation}
\label{eq:Tc}
\mathbf{E}[T_{C}]=(\frac{1}{p_s}-1)[(\frac{p_i}{1-p_s})T_{slot}+(\frac{p_c}{1-p_s})T_{col}],
\end{equation}
where $T_{slot}$ represents the idle slot duration and $T_{col}$ corresponds to the collision time, equal to: $T_{col}=T_{DIFS}+T_{a \oplus b}+T_{SIFS}$. In addition, the probabilities of having an idle ($p_i$), a successful ($p_s$), or a collided ($p_c$) slot can be written as:

\begin{equation}
\label{eq:pi}
 	p_i=1-p_{tr}
\end{equation}
\begin{equation}
\label{eq:ps}
 	p_s=p_{tr}\cdot p_{s|tr}
\end{equation}
\begin{equation}
\label{eq:pc}
 	p_c=p_{tr}\cdot(1-p_{s|tr}),
\end{equation}
where $p_{tr}$ is the probability that at least one relay attempts to transmit:
\begin{equation}
\label{eq:ptr}
 	p_{tr}=1-(1-\tau)^{\mathbf{E}[|\mathcal{A}_n|]}
\end{equation}
and $p_{s|tr}$ denotes the probability of a successful transmission (i.e., exactly one station transmits conditioned on the fact that at least one station transmits):
\begin{equation}
\label{eq:pstr}
 	p_{s|tr}=\frac {\mathbf{E}[|\mathcal{A}_n|]\tau (1-\tau)^{\mathbf{E}[|\mathcal{A}_n|]-1}}{1-(1-\tau)^{\mathbf{E}[|\mathcal{A}_n|]}}.
\end{equation}

In Eq.(\ref{eq:ptr}) and (\ref{eq:pstr}), $\tau$ is the probability that a node transmits in a randomly selected slot and $\mathbf{E}[|\mathcal{A}_n|]$ is the expected number of active relays during the cooperation phase as we have seen in Section~\ref{subsec}. It is worth noting that traditional MAC-oriented analytical works usually neglect the impact of the PHY layer by including the total number of relays~($n$) in the theoretical expressions, while in our work, this set is restricted by taking into account realistic PHY layer conditions.

\subsubsection{Energy Efficiency}
\label{sec:energy}

The network energy efficiency, measured in b/J, can be defined as the amount of transmitted useful information per energy unit. Considering the protocol operation, the expected energy efficiency~($\eta$) may be written as:

\begin{equation}
\label{eq:Eef}
 	\mathbf{E}[\eta]=\frac{(1-OPER_{AB})\cdot\mathbf{E}[Payload]+2\cdot OPER_{AB}\cdot(1-p_{out})\cdot\mathbf{E}[Payload]}{\mathbf{E}[\mathcal{E}_{total}]},
\end{equation}
where the numerator corresponds to the expected number of delivered useful bits during one communication round, and the denominator represents the average energy consumption at the same time period.

Regarding the expected total energy consumption in the network, following the same line of thought, we discompose the operation into the direct transmission and the cooperation phase. Hence:

\begin{equation}
\label{eq:Etotal}
 	\mathbf{E}[\mathcal{E}_{total}]=\mathcal{E}_{d}+OPER_{AB}\cdot \mathbf{E}[\mathcal{E}_{coop}].
\end{equation}

Let us recall that the network consists of two nodes ($A$ and $B$) that exchange packets with the assistance of $n$ relays. Defining as $P_{Tx}$, $P_{Rx}$ and $P_{idle}$ the power levels associated to the transmission ($Tx$), reception ($Rx$) and idle mode, respectively, the energy consumption during the direct transmissions can be estimated as:

\begin{equation}
\label{eq:Ed}
 	\mathcal{E}_{d}=P_{Tx}\cdot T_a+(n+1)\cdot P_{Rx}\cdot T_a.
\end{equation}

On the other hand, the term $\mathbf{E}[\mathcal{E}_{coop}]$ is composed of the energy consumption during the network outage and the energy consumed in the successful cooperation:

\begin{equation}
\label{eq:Ecoop}
 	\mathbf{E}[\mathcal{E}_{coop}]=p_{out}\cdot \mathcal{E}_{out}+(1-p_{out})\cdot \mathbf{E}[\mathcal{E}_{suc\_coop}],
\end{equation}
where the $\mathbf{E}[\mathcal{E}_{suc\_coop}]$ includes the required energy for a perfectly scheduled cooperative phase ($\mathcal{E}_{min}$), and the energy consumption during the contention phase ($\mathbf{E}[\mathcal{E}_{cont}]$):

\begin{equation}
\label{eq:Ecoop}
 	\mathbf{E}[\mathcal{E}_{suc\_coop}]=\mathcal{E}_{min}+\mathbf{E}[\mathcal{E}_{cont}].
\end{equation}

Hence, considering the network topology and the protocol's operation, we have:

\begin{equation}
\label{eq:Eout}
 	\mathcal{E}_{out}=(n+2)\cdot P_{idle}\cdot T_{timeout}
\end{equation}

\footnotesize
\begin{equation}
\label{eq:Emin}
\begin{aligned}
&\mathcal{E}_{min}=(n+2)\cdot P_{idle}\cdot T_{SIFS}+P_{Tx}\cdot (T_{RFC}+T_B)+(n+1)\cdot P_{Rx}\cdot(T_{RFC}+T_B)+(n+2)\cdot P_{idle}\cdot T_{ONC}+\\
&+(n+2)\cdot P_{idle}\cdot T_{DIFS}+P_{Tx}\cdot T_{a\oplus b}+2\cdot P_{Rx} \cdot T_{a\oplus b}+(n-1)\cdot P_{idle}\cdot T_{a\oplus b}+(n+2)\cdot P_{idle}\cdot T_{SIFS}+\\
&+2\cdot P_{Tx}\cdot T_{ACK}+2\cdot (n+1)\cdot P_{Rx}\cdot T_{ACK}+(n+2)\cdot P_{idle} \cdot T_{SIFS}\\
\end {aligned}
\end{equation}
\normalsize

\begin{equation}
\label{eq:Econt}
 	\mathbf{E}[\mathcal{E}_{cont}]=p_i\cdot ((n+2)\cdot P_{idle}\cdot T_{slot})+p_c\cdot (\mathbf{E}[L]\cdot P_{Tx}\cdot T_{col}+2\cdot P_{Rx}\cdot T_{col}+(n-\mathbf{E}[L])\cdot P_{idle}\cdot T_{col}),
\end{equation}
where $\mathbf{E}[L]$ represents the average number of relays that transmit a packet simultaneously. Given the existence of $\mathbf{E}[|\mathcal{A}_n|]$ active relays, the probability $p_l$ that exactly $l$ stations are involved in a collision can be expressed as:

\begin{equation}
\label{eq:pk}
 	p_l=\frac{\dbinom{\mathbf{E}[|\mathcal{A}_n|]}{l}\tau^{l}(1-\tau)^{\mathbf{E}[|\mathcal{A}_n|]-l}}{p_c}
\end{equation}
and, therefore:

\begin{equation}
\label{eq:Ek}
 	\mathbf{E}[L]=\displaystyle\sum_{l=2}^{\mathbf{E}[|\mathcal{A}_n|]}l\cdot p_{l}=\displaystyle\sum_{l=2}^{\mathbf{E}[|\mathcal{A}_n|]}l\cdot\frac{\dbinom{\mathbf{E}[|\mathcal{A}_n|]}{l}\tau^{l}(1-\tau)^{\mathbf{E}[|\mathcal{A}_n|]-l}}{p_c}.
\end{equation}

\section{Model Validation and Performance Evaluation}
\label{sec:performance}

We have developed a MATLAB simulator that incorporates both the NCCARQ rules and the PHY layer design, in order to validate our analytical model and study the impact of exponentially correlated shadowing on the performance of NC-based MAC protocols. In this section, we present the simulation setup along with the results of our experiments.

\subsection{Simulation Setup}
\label{sec:setup}

The considered network, depicted in Fig. \ref{f1}, consists of two nodes ($A$ and $B$) that participate in a bidirectional wireless communication, and $n$ relay nodes that contribute to the data exchange. In the same figure, the shadowing correlation between the different links is highlighted, assuming that: i) all $AR_i$ links are exponentially correlated as described in Eq.(\ref{eq1}), ii) all $BR_i$ links are also exponentially correlated according to Eq.(\ref{eq2}), and iii) pairs of $AR_i$ and $BR_i$ links are independent, which is a reasonable assumption according to measurements in \cite{cor1}. Furthermore, we adopt a symmetric network topology with $\rho_1=\rho_2=\rho$.

The MAC layer parameters have been selected in line with the IEEE 802.11g Standard specifications \cite{80211std}. In particular, the initial Contention Window ($CW$) for all nodes is 32, the MAC header overhead is 34~bytes, while the time for the application of NC to the data packets is considered negligible, as the coding takes place only between two packets. We also consider time slots, SIFS, DIFS and timeout interval of 20, 10, 50 and 80~$\mu$s, respectively. In addition, based on the work of Ebert et al. \cite{ebert} on the power consumption of the wireless interface, we have chosen the following power levels for our scenarios: $P_{Tx}=1900$~mW and $P_{Rx}=P_{idle}=1340$~mW.

Regarding the PHY layer parameters, we have set the reliability threshold $\gamma^*=41.14=16.14$~dB, which corresponds to a target $APER=10^{-1}$. Furthermore, we assume a relatively weak direct ($AB$) link ($\mu_{AB}=8$~dB) with respect to the SNR threshold $\gamma^*$, in order to trigger the cooperation and focus our study on the impact of correlated shadowing. The simulation parameters are summarized in Table \ref{t2}. Through the experimental assessment, we want to validate our proposed models and study the effect of the number of relays ($n$) and the correlation factor ($\rho$) on the protocol performance.

\begin{table}[htb]
\caption{System Parameters} \label{t2}
\begin{center}
\begin{tabular}{|c|c||c|c|}
\hline
\textbf{Parameter} & \textbf{Value} & \textbf{Parameter} & \textbf{Value} \\ \hline
\textit{Packet Payload} & 1500 bytes & \textit{$CW_{min}$} & 32\\ \hline
\textit{$T_{slot}$} & 20 $\mu$s & \textit{$T_{timeout}$} & 80 $\mu$s \\ \hline
\textit{SIFS} & 10 $\mu$s & \textit{DIFS} & 50 $\mu$s\\ \hline
\textit{MAC Header} & 34 bytes & \textit{PHY Header} & 96 $\mu$s \\ \hline
\textit{Data Tx.Rate} & 54 Mb/s & \textit{Control Tx.Rate} & 6 Mb/s \\ \hline
$\gamma^*$ & 16.14 dB & $\sigma$ & [0,10] dB\\ \hline
$\mu_{AR_i}=\mu_{BR_i}$ & \{15,20\} dB & $\mu_{AB}$ & 8 dB\\ \hline
\textit{$P_{Tx}$} & 1900 mW & \textit{$P_{Rx}$} & 1340 mW \\ \hline
\textit{$P_{idle}$} & 1340 mW & $\rho$ & [0,1) \\ \hline
 \end{tabular}
\end{center}
\end{table}

\subsection{Model Validation}
\label{sec:validation}

In the first set of our experiments, we study the PHY layer impact on the communication, while we validate the derived theoretical expressions. Fig. \ref{phyf1}-\ref{val2} depict the expected size of the active relay set ($\mathbf{E}[|\mathcal{A}_n|]$), the network outage probability ($p_{out}$), the expected network throughput ($\mathbf{E}[S_{total}]$) and the expected energy efficiency ($\mathbf{E}[\eta]$), respectively, for different values of the shadowing standard deviation $\sigma$, assuming strong links between the end nodes~($A$,$B$) and the relays~($R_i$), i.e., $\mu_{AR_i}=\mu_{BR_i}=20$~dB.

\begin{figure}[htb]
\centering
\includegraphics[width=1\columnwidth]{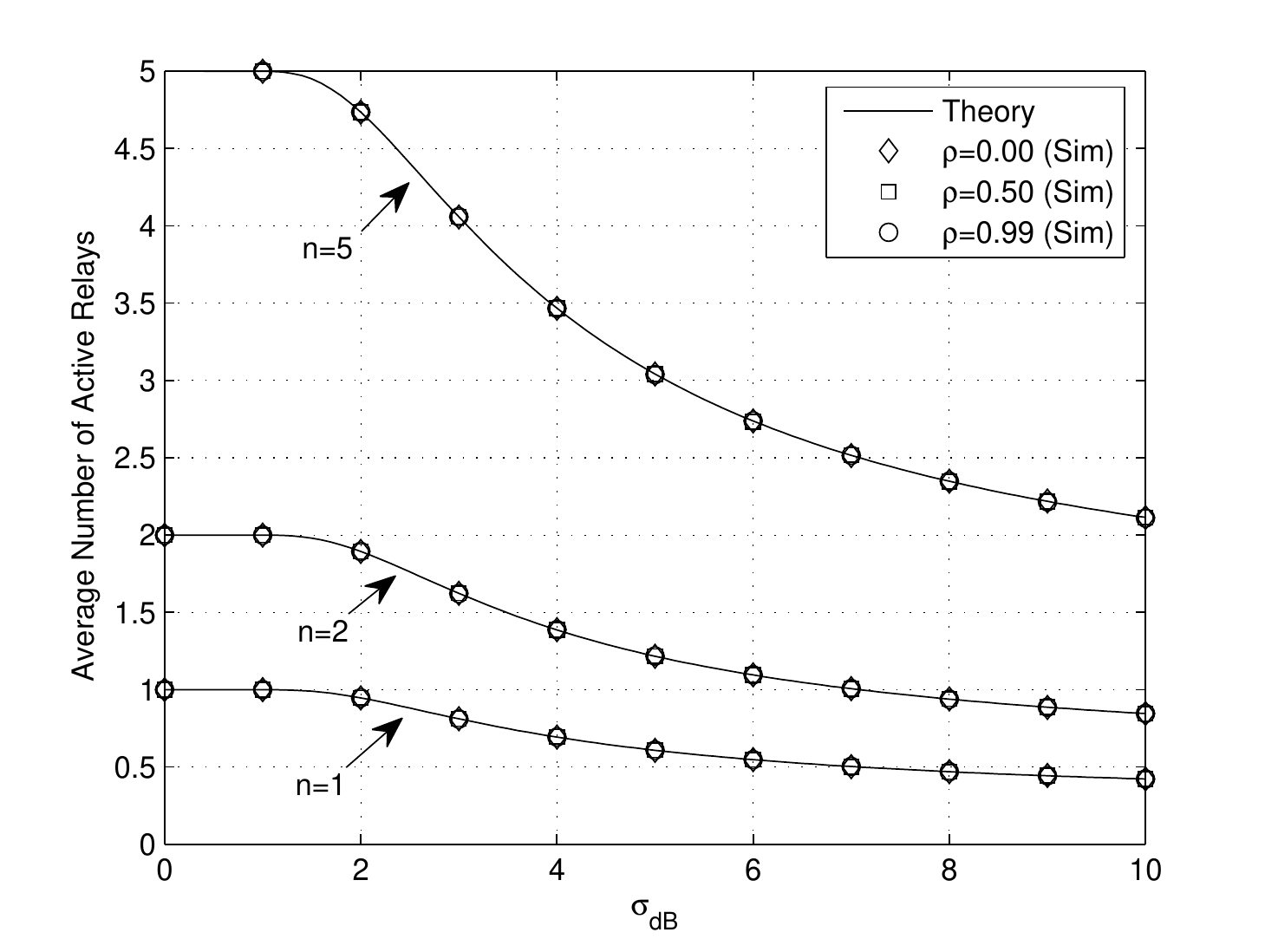}
\caption{Expected size of the active relay set ($\mathbf{E}[|\mathcal{A}_n|]$) vs. Shadowing standard deviation ($\sigma$) ($\mu_{AR_i}=\mu_{BR_i}=20$~dB)}\label{phyf1}
\end{figure}

In Fig. \ref{phyf1}, we consider different total number of relays and various indicative values for the correlation factor ($\rho$), deriving two important conclusions. First, the experiments validate our analysis, demonstrating that the average number of active relays is independent of the shadowing correlation among the wireless links. The second important remark concerns the negative effect of $\sigma$ in the number of active relays. In this particular scenario, where the mean SNR value is above the threshold $\gamma^*$, the shadowing variation has a detrimental role in the communication. As a result, higher values of $\sigma$ restrict the potential diversity benefits by reducing the expected size of the active relay set. However, it should be mentioned that in the opposite case (i.e., when the mean SNR value is below the reliability threshold), high values of $\sigma$ would imply lower outage probability and higher number of active relays, hence increasing the expected network throughput in the system, as we will examine in the following section.

Fig. \ref{phyf2} illustrates the theoretical and simulation results for the network outage probability for different correlation factors ($\rho$) and number of relays ($n$). Similar to the previous case, the shadowing deviation deteriorates the system performance, increasing the probability of having no active relay in the system. However, in this case, the impact of shadowing correlation on the system is clearly demonstrated in the figure, since high values of $\rho$ cause almost identical outage probability for the network independently of $n$, annulling the advantages of the distributed cooperation. On the other hand, independent wireless links ($\rho=0$) exploit the diversity offered by the relays, considerably reducing the outage probability as the total number of relays in the system increases (e.g., $n=5$). As a result, the significant effect of $\rho$ on the probability of outage has a direct impact on the end-to-end metrics under study, highlighting the importance of having the exact knowledge of the shadowing correlation conditions in the network.

\begin{figure}[htb]
\centering
\includegraphics[width=1\columnwidth]{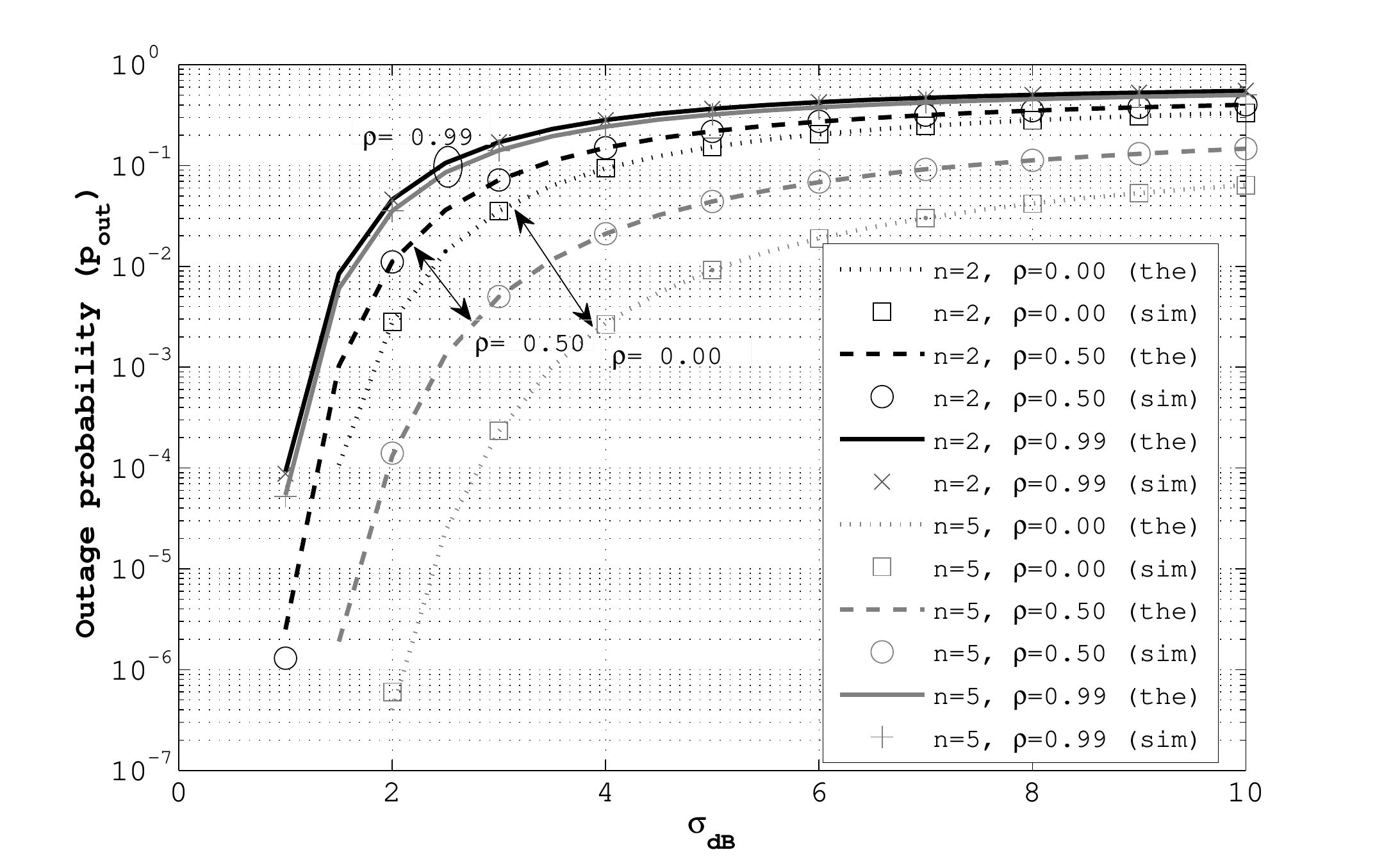}
\caption{Network outage probability ($p_{out}$) vs. Shadowing standard deviation ($\sigma$) ($\mu_{AR_i}=\mu_{BR_i}=20$~dB)}\label{phyf2}
\end{figure}

In Fig. \ref{val1}, we study the impact of shadowing standard deviation ($\sigma$) on the network throughput for different number of relay nodes ($n$). In this specific case, where $\mu_{AR_i}=\mu_{BR_i}>\gamma^*$, the wireless communication would always be successful without the shadowing random fluctuations and, hence, shadowing is harmful for the system, as it introduces many events where the received SNR is below the threshold $\gamma^*$. In addition, two important remarks are highlighted: i) distributed cooperation is beneficial, as the throughput increases with the number of available relays ($n$), and ii) shadowing correlation is detrimental to the potential gain introduced by cooperation.

\begin{figure}[htb]
\centering
\includegraphics[width=1\columnwidth]{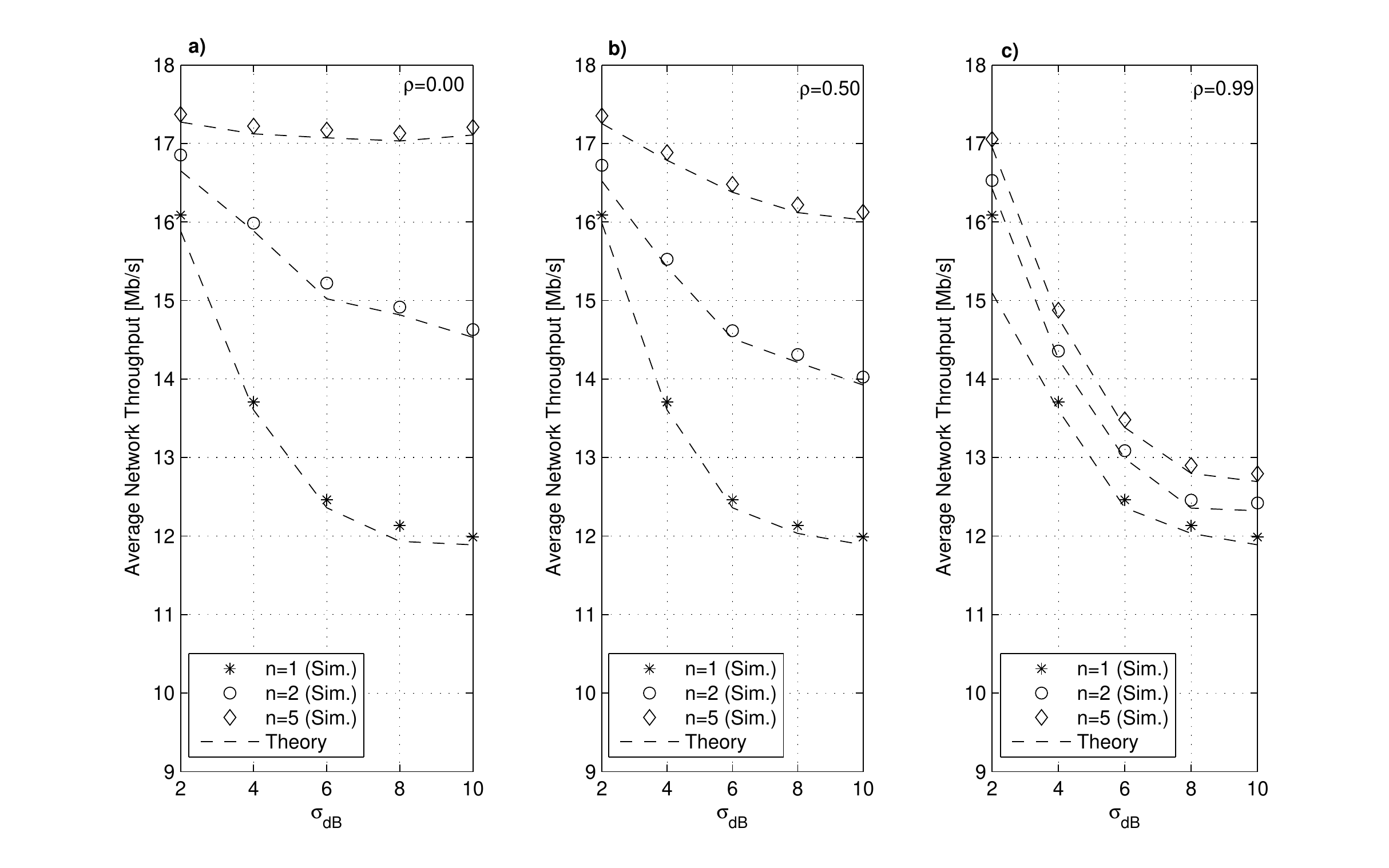}
\caption{Average network throughput ($\mathbf{E}[S_{total}]$) vs. shadowing standard deviation ($\sigma$), assuming $\mu_{AR_i}=\mu_{BR_i}=20$~dB and a) $\rho=0.00$, b) $\rho=0.50$, c) $\rho=0.99$} \label{val1}
\end{figure}

The expected network energy efficiency for different number of relays and correlated conditions is plotted in Fig. \ref{val2}, validating our model and revealing intriguing facets of the problem, since they disclose a notable trade-off between the system throughput and energy efficiency. In particular, although distributed cooperation provides significant gains in the throughput for high SNR scenarios (Fig. \ref{val1}), it has a negative impact on the energy efficiency, reducing it up to 100\% under specific conditions. This fact can be explained by taking into account the high throughput (12~Mb/s) achieved in single-relay networks under good channel conditions. Cooperation may increase this performance up to 18~Mb/s, but the aggregated energy consumption of many relays in the network results in a significant reduction of the total energy efficiency. The next section presents a thorough performance evaluation with regard to the impact of the number of relays and the shadowing correlation on the network performance.

\begin{figure}[htb]
\centering
\includegraphics[width=1\columnwidth]{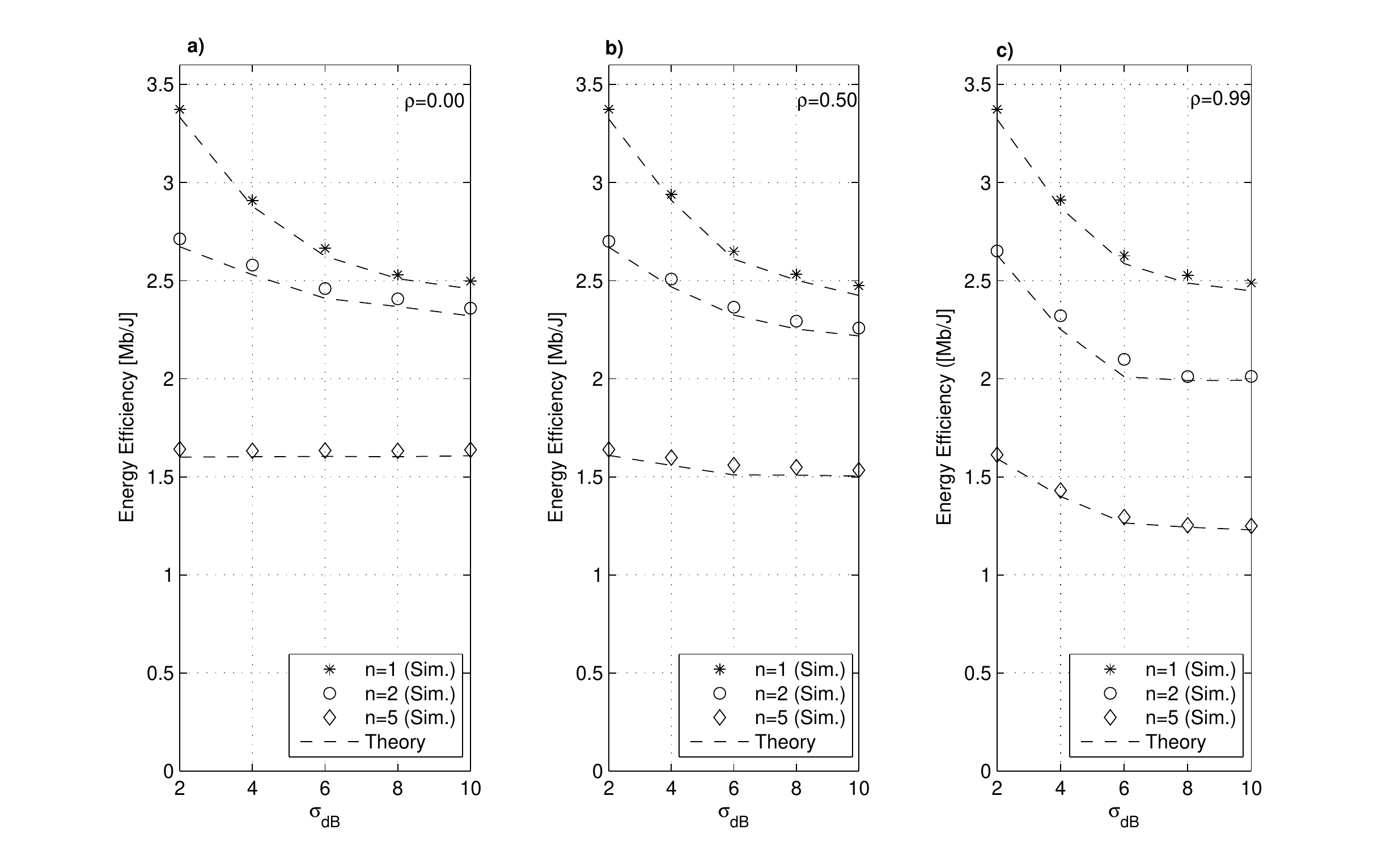}
\caption{Average energy efficiency ($\mathbf{E}[\eta]$)vs. shadowing standard deviation ($\sigma$), assuming $\mu_{AR_i}=\mu_{BR_i}=20$~dB and a) $\rho=0.00$, b) $\rho=0.50$, c) $\rho=0.99$} \label{val2}
\end{figure}

\subsection{Performance Evaluation}

In Fig. \ref{th1}-\ref{ee2}, we study the impact of the correlation and the number of relays ($n$) on the network throughput and energy efficiency, for three different topologies (i.e., $\rho=0$, $\rho=0.5$ and $\rho=0.99$). In this set of experiments, we have set $\mu_{AR_i}=\mu_{BR_i}=15$~dB, which is a value close to the reliability threshold ($\gamma^*$). In addition, in order to emphasize the importance of the shadowing standard deviation, we have adopted two extreme values of $\sigma$, i.e., $\sigma=2$ dB and $\sigma=10$ dB.

\subsubsection{Impact of the number of relays ($n$) in the network}

Fig. \ref{th1} studies the network throughput versus the number of relays, deriving three important conclusions. First, the high number of relays in the network is beneficial for the average system throughput, especially for low and medium values of $\rho$, where the incorporation of more relays in the network results in higher diversity. The second interesting remark is related to the influence of the shadowing correlation factor on the protocol performance. In particular, as expected by studying the outage probability, high values of $\rho$ compensate the benefits from the cooperation, as the throughput increases only slightly with the number of relays. However, for medium and low values of the $\rho$, the protocol performance seems to remain unaffected, which implies that the network throughput is not proportional to the correlation among the links. This observation would be particularly important for network design, since it allows the network deployment under relatively high correlation conditions (e.g. $\rho=0.5$), although they may sound prohibitive on principle. Finally, comparing Fig. \ref{th1}a and Fig. \ref{th1}b, we observe the impact of $\sigma$ on the throughput, as high values of the shadowing standard deviation ($\sigma$=10 dB) significantly increase the protocol performance, especially for small number of relays in the network. For example, in the case of single-relay systems ($n=1$), the throughput is almost quadruple, something that can be intuitively explained by taking into account that the mean value of the SNR ($\mu_{AR_i}=\mu_{BR_i}$) is marginally lower than the decoding threshold and, as a result, the random fluctuations introduced by shadowing due to high values of $\sigma$ enable the correct packet decoding more often.

\begin{figure}[htb]
\centering
\includegraphics[width=1\columnwidth]{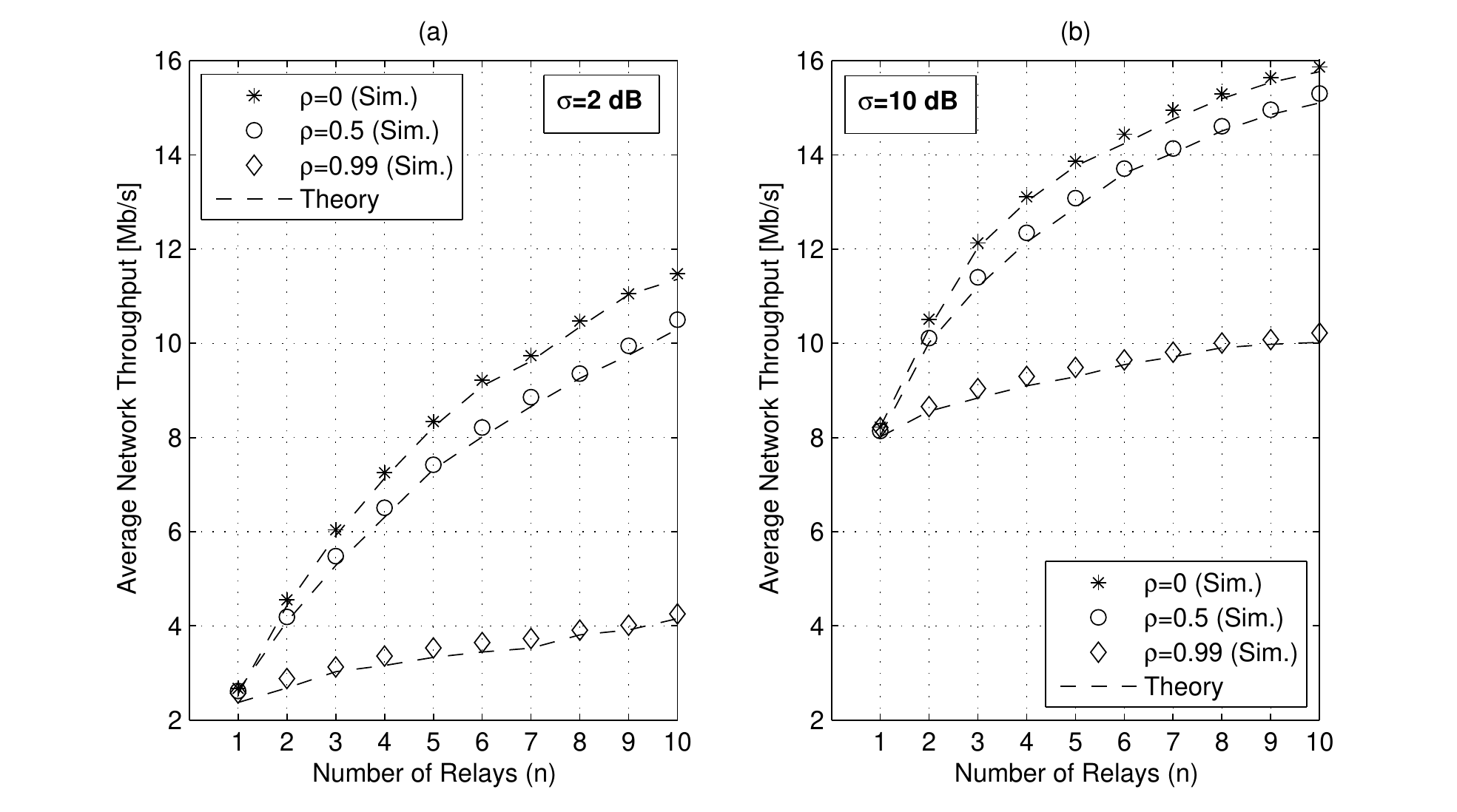}
\caption{Average network throughput ($\mathbf{E}[S_{total}]$) vs. Number of relays ($n$) for $\rho=0$, $\rho=0.5$, $\rho=0.99$, considering: a) $\sigma$=2~dB and b) $\sigma$=10~dB ($\mu_{AR_i}=\mu_{BR_i}=15$~dB)} \label{th1}
\end{figure}

Figures \ref{ee1}a and \ref{ee1}b present the network energy efficiency for $\sigma$=2~dB and $\sigma$=10~dB, respectively. The first clear outcome from both figures is the negative role of shadowing correlation in the energy efficiency of the network. In addition, the shadowing standard variation plays again an important role, as the energy efficiency increases with $\sigma$, mainly due to the significant throughput increase in the network. However, as the number of relays increases, the relative gain due to $\sigma$ decreases, since the respective throughput gains are significantly higher for low values of $\sigma$. Moreover, it is worth commenting on the different behavior of the plots in each figure. In Fig. \ref{ee1}a, we observe that, for low and medium correlation factors, the energy efficiency increases until $n=4$, as the throughput gains we achieve by adding relays in the network deserve the increased energy consumption in the system. For higher number of relays (i.e., $n>4$), the energy efficiency decreases slowly, since the incorporation of more relays (which need extra energy resources) in the network does not fully justify the increase in throughput. In high correlated scenarios (i.e., $\rho$=0.99), as expected, the energy efficiency decreases as the number of relays ggrwos, since, due to the almost identical conditions in the wireless links, the throughput of the network is not significantly affected. Regarding the case of $\sigma$=10~dB (Fig. \ref{ee1}b), we can see that the increase in the number of relays causes a significant reduction in the network energy efficiency, although the average throughput (Fig. \ref{th1}b) follows again an increasing trend. However, in this case, we should consider that the throughput with only one relay in the network is relatively high and, therefore, the price (in terms of energy) we have to pay by adding more relays is higher than the actual gains we get in terms of performance.

\begin{figure}[htb]
\centering
\includegraphics[width=1\columnwidth]{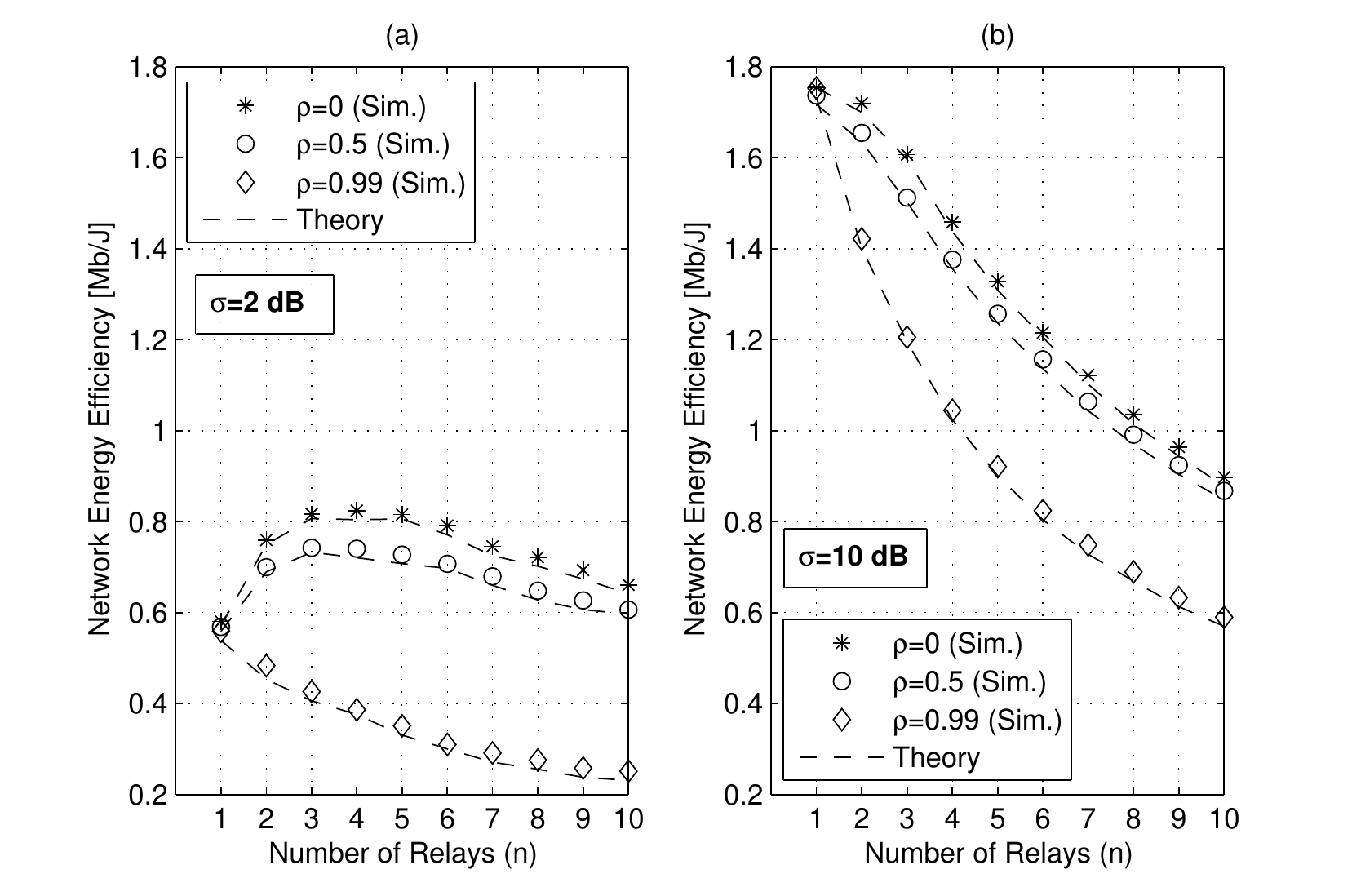}
\caption{Network energy efficiency ($\eta$) vs. Number of relays ($n$) for $\rho=0$, $\rho=0.5$, $\rho=0.99$, considering: a) $\sigma$=2~dB and b) $\sigma$=10~dB ($\mu_{AR_i}=\mu_{BR_i}=15$~dB)}\label{ee1}
\end{figure}

\subsubsection{Impact of the correlation factor ($\rho$)}

In Fig. \ref{th2}a and \ref{th2}b, we study the impact of the shadowing correlation factor ($\rho$) on the network throughput for $\sigma=2$ dB and $\sigma=10$ dB, respectively, while we also plot the case of one relay ($n=1$) in the network as a reference scenario, although the correlation in this case has no practical meaning. Once more, we confirm that the high number of relays, as well as high values of $\sigma$ are beneficial for the throughput in cooperative scenarios with the specific setup. With regard to the impact of shadowing correlation, in both cases, we observe that the cooperation tends to be useless for high correlation factors ($\rho\rightarrow 1$), since all the relays experience very similar shadowing attenuations, and the throughput reduces to that of a single-relay network. However, it can be remarked that the impact of correlation is more severe in environments with low $\sigma$, as the difference in the throughput performance in case of independent ($\rho=0$) and fully correlated ($\rho=0.99$) links is much higher in Fig. \ref{th2}a. In addition, we can verify the conclusions of the previous set of experiments, where it was shown that the results for $\rho=0$ and $\rho=0.5$ (which are common values for outdoor environments \cite{out_cor}) were not significantly different. In this figure, we can explicitly specify that the severe performance degradation occurs for extremely high values of the correlation factor (i.e., $\rho>0.7$), which are usually found in indoor environments \cite{in_cor1,in_cor2}.

\begin{figure}[htb]
\centering
\includegraphics[width=1\columnwidth]{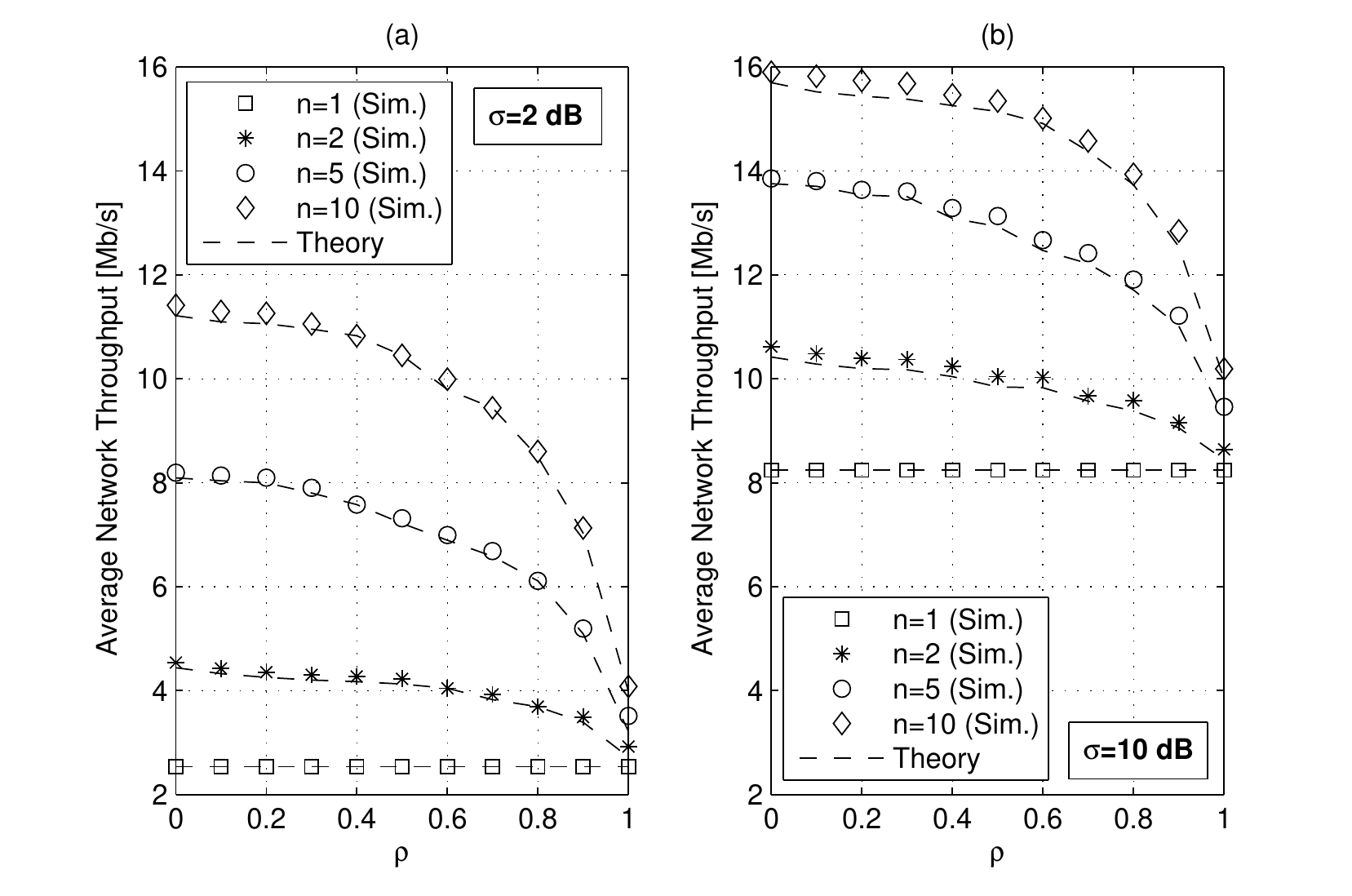}
\caption{Average network throughput ($\mathbf{E}[S_{total}]$) vs. Shadowing correlation factor ($\rho$) for $n=1,2,5,10$, considering: a) $\sigma$=2~dB and b) $\sigma$=10~dB ($\mu_{AR_i}=\mu_{BR_i}=15$~dB)}\label{th2}
\end{figure}

The impact of correlation on the network energy efficiency is shown in Fig.~\ref{ee2}. Starting from the case of $\sigma=10$ dB (Fig.~\ref{ee2}b), we can see that adding more relays in the network causes a considerable reduction in the energy efficiency, while the effect of correlation is not particularly harmful. This fact can be explained by comparing the throughput performance of networks with $n=1$ and $n=10$ relays in Fig. \ref{th2}b. Apparently, we need 10 relays in order to double the throughput of single-relay networks. However, the significantly increased energy consumption in the network is not in accordance with the throughput enhancement, thus resulting in lower energy efficiency in the network. The same conclusion can be also supported by noticing that, in contrast to the throughput performance results, the baseline network energy efficiency (i.e., with one relay in the network) is higher in all cases, since the protocol is able to achieve high performance standards under these conditions, even with only one relay in the system. On the other hand, the results in Fig.~\ref{ee2}a are not straightforward, as they identify the necessity of carefully choosing the number of relays in order to achieve the highest energy efficiency. Unlike Fig.~\ref{ee2}b, where the incorporation of additional relays in the system always results in energy efficiency degradation, in this case, the existence of more than one relay in the system, besides throughput, can also be beneficial for the energy efficiency of the network, especially in low correlation scenarios. For instance, in our particular experiment, we can see that the energy efficiency increases by adding few relays in the system (e.g., up to $n=5$), since the achieved throughput raises considerably even with only a small number of deployed relay nodes (Fig.~\ref{th2}a). As the number of relays increases (e.g., $n=10$), the energy consumption grows in higher rates than the throughput, which results in lower energy efficiency in the network. However, as the correlation among the wireless links increases (i.e., $\rho>0.7$), the deployment of multiple relays does not provide significant performance enhancement, something that is directly reflected in the energy efficiency, which drops significantly.

\begin{figure}[htb]
\centering
\includegraphics[width=1\columnwidth]{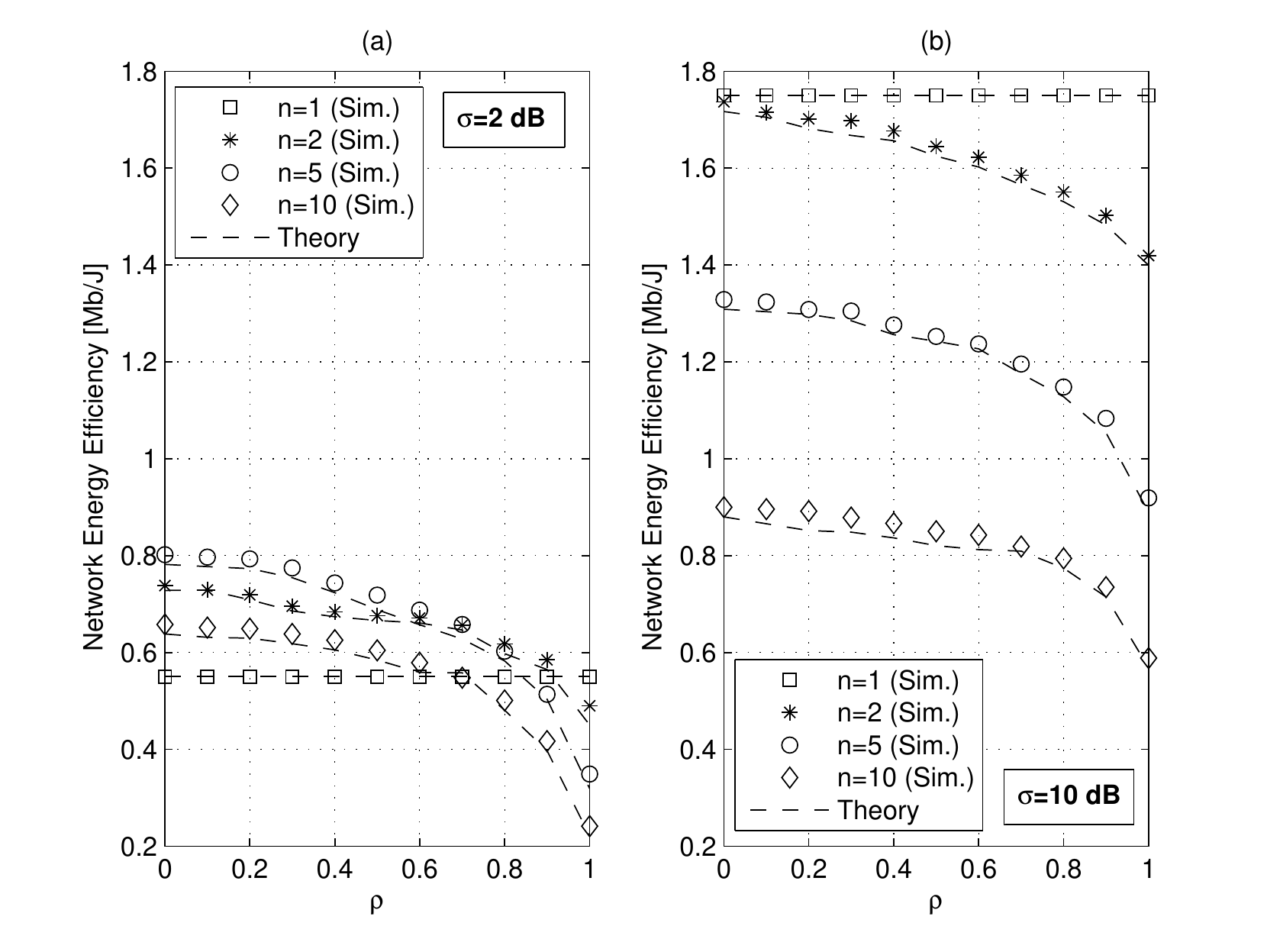}
\caption{Network energy efficiency ($\eta$) vs. Shadowing correlation factor ($\rho$) for $n=1,2,5,10$, considering: a) $\sigma$=2~dB and b) $\sigma$=10~dB ($\mu_{AR_i}=\mu_{BR_i}=15$~dB)}\label{ee2}
\end{figure}

In all cases, the experimental results clearly showcase that: i) the system performance is notably affected only by extremely high values of $\rho$, and ii) although shadowing correlation does not have an influence on the average number of active relays, it significantly affects the outage probability and, thus, the network should be designed taking into account the exact physical parameters and application requirements. In addition, very interesting tradeoffs between the throughput and the energy efficiency in the network have been revealed by the extensive performance assessment. In particular, the throughput improvement offered by the distributed cooperation comes with the respective energy costs that should not be neglected. The incorporation of many relays in the communication increases the total energy consumption in the network, without yielding the expected throughput gains in all cases. More specifically, in highly correlated scenarios where the cluster of relays does not offer significant throughput gains, the energy efficiency is remarkably reduced. It is also worth noting that shadowing variations can be either beneficial or harmful for the communication, depending on the quality of the cooperative links in the network. To that end, the proposed cross-layer analytical model provides the network designer with accurate estimations that facilitate the decision for the optimum number of relays in the network and their best possible placement in order to reduce the deployment and operational cost, guaranteeing, at the same time, the desired network throughput.

\section{Concluding Remarks}
\label{sec:conclusions}

In this paper, we have proposed a cross-layer analytical framework to model end-to-end metrics (i.e., throughput and energy efficiency), in two-way cooperative networks under realistic correlated shadowing conditions. The proposed model jointly considers the MAC layer operation along with crucial PHY layer parameters, such as the network outage probability and the average number of active relays in the network. The extensive performance assessment has revealed interesting tradeoffs between throughput and energy efficiency, while the PHY layer analysis has demonstrated that the average number of relays is independent of the shadowing correlation in the wireless links. The proposed analytical model can provide useful insights that can be exploited for effective network planning in realistic channel conditions. In our future work, we plan to study the temporal shadowing correlation and design effective cross-layer mechanisms that enhance the performance of the state-of-the-art NC-aided MAC protocols.

\appendices 

\section{Derivation of Equation (\ref{eq:qk1})}
\label{a4}

After setting $y_k = -z+t$, where $t = \frac{\gamma^* -\mu_{AR_k}}{\sigma_{AR_k}}$, Eq. (\ref{eq:qk}) may be written as:

\begingroup
\begin{eqnarray}
q_k(x) & = & \int^{t}_{-\infty}exp\left[-\frac{\left(\rho_1^2+1\right)y^2_k-2\rho_1 y_kx}{2\left(1-\rho_1^2\right)}\right]q_{k-1}(y_k)dy_k\nonumber\\
			 & = & \int^{\infty}_{0}exp\left[-\frac{\left(\rho_1^2+1\right)\left(z-t\right)^2+2\rho_1 x\left(z-t\right)}{2\left(1-\rho_1^2\right)}\right]q_{k-1}(-z+t)dz\nonumber\\
			 & = & \sqrt{\frac{2\left(1-\rho_1^2\right)}{1+\rho_1^2}}exp\left[-\frac{\left(\rho_1^2+1\right)t^2-2\rho_1 xt}{2\left(1-\rho_1^2\right)}\right] \int^{\infty}_{0}exp\left[-z^2\right]\nonumber\\
			 && \times exp\left[-\frac{\left(2\rho_1 x - 2t\left(\rho_1^2+1\right)\right)z}{\sqrt{2\left(1-\rho_1^4\right)}}\right]q_{k-1}(-\sqrt{\frac{2\left(1-\rho_1^2\right)}{1+\rho_1^2}}z+t)dz.\label{eq:qk2}
\end{eqnarray}
\endgroup

\noindent The aforementioned expression may be evaluated using the Gaussian quadratures for the integral $\int^{\infty}_{0}exp\left[-x^2\right]f(x)dx$, originally proposed in \cite[Table II, N=15]{1969method}, as follows:
\begin{equation}
\int^{\infty}_{0}exp\left[-x^2\right]f(x)dx \approx \sum^{N_{GQR}}_{i=1}w_i f(r_i),
\end{equation}

\noindent where $w_i$, $r_i$ denote the weights and roots of the Gaussian quadratures as defined in \cite[Table II, N=15]{1969method}, and $N_{GQR}$ is the number of points used for the integral evaluation. By using the aforementioned formula, we may write Eq. \eqref{eq:qk2} as follows:
\begingroup
\begin{eqnarray}
q_k(x) & = & \sqrt{\frac{2\left(1-\rho_1^2\right)}{1+\rho_1^2}}exp\left[-\frac{\left(\rho_1^2+1\right)t^2-2\rho_1 xt}{2\left(1-\rho_1^2\right)}\right] \sum^{N_{GQR}}_{i=1}w_i \nonumber\\
			 && \times exp\left[-\frac{\left(2\rho_1 x - 2t\left(\rho_1^2+1\right)\right)r_i}{\sqrt{2\left(1-\rho_1^4\right)}}\right]q_{k-1}(-\sqrt{\frac{2\left(1-\rho_1^2\right)}{1+\rho_1^2}}r_i+t),
\end{eqnarray}
\endgroup
thus deriving Eq. \eqref{eq:qk1}.

\section{Iterative Computation of $Pr\{\mathfrak{b}_{A\chi_n}\}$ by Using the Gaussian Quadratures}
\label{a1}

\begin{flushleft}
\texttt{Step 1: Evaluate $q_1\left(x\right)$ at the points $x_i = \sqrt{\frac{2\left(1-\rho_1^2\right)}{1+\rho_1^2}}r_i-t$ if
$\boldsymbol{1}_{AR_1}$ or $x_i = -\sqrt{\frac{2\left(1-\rho_1^2\right)}{1+\rho_1^2}}r_i+t$ if $\boldsymbol{0}_{AR_1}$, according to:}
\end{flushleft}

\begingroup
\footnotesize
\begin{equation}
q_1(x) = \left\{\begin{array}{c}\sqrt{2\pi\left(1-\rho_1^2\right)} Q\left(\frac{t-\rho_1 x}{\sqrt{\left(1-\rho_1^2\right)}}\right)exp\left(\frac{\rho_1^2x^2}{2(1-\rho_1^2)}\right) \ if\ \boldsymbol{1}_{AR_1}\\
\sqrt{2\pi\left(1-\rho_1^2\right)} Q\left(\frac{\rho_1 x-t}{\sqrt{\left(1-\rho_1^2\right)}}\right)exp\left(\frac{\rho_1^2x^2}{2(1-\rho_1^2)}\right)\ otherwise\end{array}\right.,\  t = \frac{\gamma^*-\mu_{AR_1}}{\sigma_{AR_1}}\label{eq-1}
\end{equation}
\endgroup

\begin{flushleft}
\texttt{Step k: Evaluate $q_k\left(x\right)$, $\forall k \in[2,n-1]$ at the points, $x_i = \sqrt{\frac{2\left(1-\rho_1^2\right)}{1+\rho_1^2}}r_i-t$ if
$\boldsymbol{1}_{AR_k}$ or $x_i = -\sqrt{\frac{2\left(1-\rho_1^2\right)}{1+\rho_1^2}}r_i+t$ if $\boldsymbol{0}_{AR_k}$ according to:}
\end{flushleft}

\begingroup
\tiny
\begin{eqnarray}
q_k(x) & = & \left\{
\begin{array}{c}
\sqrt{\frac{2\left(1-\rho_1^2\right)}{1+\rho_1^2}}exp\left(-\frac{\left(\rho_1^2+1\right)t^2-2\rho_1 xt}{2\left(1-\rho_1^2\right)}\right)\sum^{N_{GQR}}_{i=1}w_i exp\left(\frac{\left(2\rho_1 x - 2t\left(\rho_1^2+1\right)\right)r_i}{\sqrt{2\left(1-\rho_1^4\right)}}\right)q_{k-1}\left(\sqrt{\frac{2\left(1-\rho_1^2\right)}{1+\rho_1^2}}r_i-t\right)\ if \ \boldsymbol{1}_{AR_k}\\
 \sqrt{\frac{2\left(1-\rho_1^2\right)}{1+\rho_1^2}}exp\left(-\frac{\left(\rho_1^2+1\right)t^2-2\rho_1 xt}{2\left(1-\rho_1^2\right)}\right)\sum^{N_{GQR}}_{i=1}w_i exp\left(-\frac{\left(2\rho_1 x - 2t\left(\rho_1^2+1\right)\right)r_i}{\sqrt{2\left(1-\rho_1^4\right)}}\right)q_{k-1}\left(-\sqrt{\frac{2\left(1-\rho_1^2\right)}{1+\rho_1^2}}r_i+t\right)\ if \ \boldsymbol{0}_{AR_k}
\end{array}\right.,\
\end{eqnarray}
\endgroup

\begin{flushleft}
\texttt{where  $t = \frac{\gamma^*-\mu_{AR_k}}{\sigma_{AR_k}}$}
\end{flushleft}

\begin{flushleft}
\texttt{Step n: Evaluate $Pr\{\mathfrak{b}_{A\chi_n}\}$ according to:}
\end{flushleft}

\begingroup
\footnotesize
\begin{eqnarray}
Pr\{\mathfrak{b}_{A\chi_n}\} & = & \left\{
\begin{array}{c}
C_0\sum^{N_{GQR}}_{i=1}w_iexp\left(-\frac{t^2 + 2\sqrt{2\left(1-\rho_1^2\right)}tr_i}{2\left(1-\rho_1^2\right)}\right)q_{n-1}\left(\sqrt{2\left(1-\rho_1^2\right)}r_i+t\right)\ if\ \boldsymbol{1}_{AR_n}\\
C_0\sum^{N_{GQR}}_{i=1}w_iexp\left(-\frac{t^2 - 2\sqrt{2\left(1-\rho_1^2\right)}tr_i}{2\left(1-\rho_1^2\right)}\right)q_{n-1}\left(t-\sqrt{2\left(1-\rho_1^2\right)}r_i\right)\ if\ \boldsymbol{0}_{AR_n}
\end{array}\right.
\end{eqnarray}
\endgroup

\begin{flushleft}
\texttt{where $C_0 = [det\left(\mathbf{\Sigma}\left(\rho_1\right)\right)]^{-1/2}\left(2\pi\right)^{-n/2}$, $t = \frac{\gamma^*-\mu_{AR_n}}{\sigma_{AR_n}}$.}
\end{flushleft}

\section{Proof of Lemma 1}
\label{a2}

Any RV $Z\sim N(\mu,\sigma^2)$ may be written in the form:
\vspace{-1pt}
\begingroup
\begin{equation}
Z=  \sigma \sqrt{\rho} X_1 + \sigma \sqrt{1-\rho} X_2 + \mu
\end{equation}
\endgroup
for any given $0\leq \rho< 1$, where $X_1,X_2\sim N(0,1)$. Considering $W Z-\mu$, $ X =\sigma \sqrt{\rho} X_1$, $Y=\sigma \sqrt{1-\rho} X_2$, the above equation may be written as $W=X+Y$,
\noindent where $W\sim N\left(0,\sigma^2\right)$, $X\sim N \left(0,\sigma^2\rho \right)$, $Y \sim N\left(0,\sigma^2 \left(1-\rho\right)\right)$.
%
The Cumulative Distribution Function (CDF) of $W$ can be written as:
\vspace{-1pt}
\begingroup
\begin{eqnarray}
F_w(t) 
&=& \int^{+\infty}_{-\infty}F_y\left(t-x\right)f_x(x) dx \label{eq:57}
\end{eqnarray}
\endgroup
After making some changes in variables, Eq.(\ref{eq:57}) may be written as follows:
\vspace{-1em}

\begingroup
\begin{eqnarray}
F_z(t) &=& \frac{1}{\sqrt{2\pi \sigma \rho}}\int^{+\infty}_{-\infty}F_{x_2}\left(\frac{t-x-\mu}{ \sigma \sqrt{1-\rho} }\right)e^{-\frac{x^2}{2\sigma^2 \rho}} dx
\end{eqnarray}
\endgroup
and, consequently:
\begingroup
\begin{eqnarray}
Q\left(\frac{t-\mu}{ \sigma }\right) &=&\frac{1}{\sqrt{2\pi}}\int^{+\infty}_{-\infty}Q\left(\frac{t-\sigma\sqrt{\rho}x_1-\mu}{ \sigma \sqrt{1-\rho} }\right)e^{-\frac{x^2}{2}}dx.
\end{eqnarray}
\endgroup

\noindent The above equation concludes our proof.

\section{Derivation of Equation (\ref{eq:average2})}
\label{a5}

In the special case where the number of available relays in the system is equal to 2, then, alternatively to Eq. (3), the two correlated links $\bar{\gamma}_{AR_{1}}, \bar{\gamma}_{AR_{2}}$ can be generated as follows:

\begingroup
{\setlength{\arraycolsep}{0em}
\begin{eqnarray}
\bar{\gamma}_{AR_{1}} &=& \sigma_{AR_{1}} \left(\sqrt{1-\rho_1^2}X_1+\rho_1 X_0\right)+\mu_{AR_{1}}\label{Rvs}\\
\bar{\gamma}_{AR_{2}} &=& \sigma_{AR_{2}} \left(\sqrt{1-\rho_1^2}X_2+\rho_1 X_0\right)+\mu_{AR_{2}}\label{Rvs2},
\end{eqnarray}}
\endgroup

\noindent with $X_i\sim \mathcal{N}(0,1)$, $\forall i$. In that case, we may write:

\begingroup
{\setlength{\arraycolsep}{0em}
\begin{eqnarray}
&&Pr\{\mathfrak{b}_{A1_2}\} = Pr\left\{\boldsymbol{0}_{AR_1},\boldsymbol{1}_{AR_2}\right\}\nonumber\\
&& = \int^{\infty}_{-\infty}\Pr\left\{\bar{\gamma}_{AR_{1}}\leq \gamma^*,\bar{\gamma}_{AR_{2}}> \gamma^*\left|X_0=t\right.\right\}f_{X_0}\left(t\right)dt\nonumber\\
&& = \int^{\infty}_{-\infty}\Pr\left\{X_1\leq a_1(t)\right\}\Pr\left\{X_2> a_2(t)\right\}f_{X_0}\left(t\right)dt\nonumber\\
&& = \frac{1}{\sqrt{2\pi}}\int^{\infty}_{-\infty}\left(1-Q\left(a_{1}\left(t\right)\right)\right) Q\left(a_{2}\left(t\right)\right)e^{\frac{-t^2}{2}}dt, \label{A1}
\end{eqnarray}}
\endgroup

\noindent where $a_{i}(t) = \left(\gamma^*- \mu_{AR_{i}}-\sigma_{AR_{i}} \rho_1 t\right)/\left(\sigma_{AR_{i}}\sqrt{1-\rho^2_1}\right)$ and $f_{X_0}\left(t\right)$ is the probability density function of $X_0$. Similarly, we may write:

\begingroup
{\setlength{\arraycolsep}{0em}
\begin{eqnarray}
&&Pr\{\mathfrak{b}_{A2_2}\} = \frac{1}{\sqrt{2\pi}}\int^{\infty}_{-\infty}Q\left(a_{1}\left(t\right)\right)\left(1 - Q\left(a_{2}\left(t\right)\right)\right)e^{\frac{-t^2}{2}}dt \label{A2}\\
&&Pr\{\mathfrak{b}_{A3_2}\} = \frac{1}{\sqrt{2\pi}}\int^{\infty}_{-\infty}Q\left(a_{1}\left(t\right)\right)Q\left(a_{2}\left(t\right)\right)e^{\frac{-t^2}{2}}dt. \label{A3}
\end{eqnarray}}
\endgroup

\noindent Equivalently, the probabilities $Pr\{\mathfrak{b}_{B\psi_2}\}$ can be written as follows:

\begingroup
{\setlength{\arraycolsep}{0em}
\begin{eqnarray}
&&Pr\{\mathfrak{b}_{B1_2}\} = \frac{1}{\sqrt{2\pi}}\int^{\infty}_{-\infty}\left(1-Q\left(b_{1}\left(t\right)\right)\right) Q\left(b_{2}\left(t\right)\right)e^{\frac{-t^2}{2}}dt \label{B1}\\
&&Pr\{\mathfrak{b}_{B2_2}\} = \frac{1}{\sqrt{2\pi}}\int^{\infty}_{-\infty}Q\left(b_{1}\left(t\right)\right)\left(1 - Q\left(b_{2}\left(t\right)\right)\right)e^{\frac{-t^2}{2}}dt \label{B2}\\
&&Pr\{\mathfrak{b}_{B3_2}\} = \frac{1}{\sqrt{2\pi}}\int^{\infty}_{-\infty}Q\left(b_{1}\left(t\right)\right)Q\left(b_{2}\left(t\right)\right)e^{\frac{-t^2}{2}}dt, \label{B3}
\end{eqnarray}}
\endgroup

\noindent where $b_{i}(t) = \left(\gamma^*- \mu_{BR_{i}}-\sigma_{BR_{i}} \rho_2 t\right)/\left(\sigma_{BR_{i}}\sqrt{1-\rho^2_2}\right)$. By combining Eq. (24) - (26), the average number of active relays is given by:

\begingroup
\begin{eqnarray}
\label{eq:E-3}
\mathbf{E}\left[\left|\mathcal{A}_2\right|\right] &=& \sum^2_{i=1} iPr\left\{\left|\mathcal{A}_2\right|=i\right\}\nonumber\\
&=& \left(Pr\{\mathfrak{b}_{A3_2}\} + Pr\{\mathfrak{b}_{A1_2}\}\right)\left(Pr\{\mathfrak{b}_{B3_2}\}+Pr\{\mathfrak{b}_{B1_2}\}\right) \nonumber\\
&&+ \left(Pr\{\mathfrak{b}_{A3_2}\} + Pr\{\mathfrak{b}_{A2_2}\}\right)\left(Pr\{\mathfrak{b}_{B3_2}+Pr\{\mathfrak{b}_{B2_2}\}\right). \label{average}
\end{eqnarray}
\endgroup

By substituting Eq. (\ref{A1}) - (\ref{B3}) in Eq.  (\ref{average}), we get:

\begingroup
\begin{eqnarray}
\label{eq:E-2}
\mathbf{E}\left[\left|\mathcal{A}_2\right|\right] &=& \sum^2_{i=1} \frac{1}{\sqrt{2\pi}}\int^{\infty}_{-\infty} Q\left(a_{i}\left(t\right)\right)e^{\frac{-t^2}{2}}dt\nonumber\\
&& \times\frac{1}{\sqrt{2\pi}}\int^{\infty}_{-\infty} Q\left(b_{i}\left(t\right)\right)e^{\frac{-t^2}{2}}dt.
\end{eqnarray}
\endgroup

Finally, applying Lemma 1 in Eq. (\ref{eq:E-2}), we derive Eq. (27).

\section{Proof of Proposition 1}
\label{a3}

In Section \ref{subsec}, it has been proven that the average number of active relays in a bidirectional cooperative network with two relays is given by Eq.(\ref{eq:average2}). Let us assume that, for a network with $k$ relays, it holds that:

\begin{equation}
\mathbf{E}\left[\left|\mathcal{A}_k\right|\right]=\sum^k_{i=1}Q\left(\left( \gamma^* - \mu_{AR_i}\right)/\sigma_{AR_i}\right)Q\left(\left( \gamma^* - \mu_{BR_i}\right)/\sigma_{BR_i}\right).
\end{equation}

Using the induction method, we are going to prove that:

\begin{eqnarray}
\mathbf{E}\left[\left|\mathcal{A}_{k+1}\right|\right] =  \mathbf{E}\left[\left|\mathcal{A}_{k}\right|\right]+Q\left(\frac{\gamma^* - \mu_{AR_{k+1}}}{\sigma_{AR_{k+1}}}\right)Q\left(\frac{\gamma^* - \mu_{BR_{k+1}}}{\sigma_{BR_{k+1}}}\right).
\end{eqnarray}

Let us denote by $p^{k+1}_{R_i} = Pr\left\{\left|\mathcal{A}_{k+1}\right|=i\right\}$ the probability that $i$ relays are active in a network with $k+1$ relays, and by $B_{k+1} = \left\{\boldsymbol{1}_{AR_{k+1}},\boldsymbol{1}_{BR_{k+1}}\right\}$ the event that the $R_{k+1}$ relay is active. Then, we may write:

\begingroup
\begin{eqnarray}
p^{k+1}_{R_i} &{=}& p^k_{\left.R_i\right|\overline{B}_{k+1}}Pr\left\{\overline{B}_{k+1}\right\}+p^k_{\left.R_{i-1}\right|B_{k+1}}p_{B_{k+1}}
\end{eqnarray}
\endgroup

\noindent and $p^{k+1}_{R_{k+1}} {=} p^{k}_{\left.R_k\right|B_{k+1}}p_{B_{k+1}}$. The average number of active relays may be then written as: 


\begingroup

{\setlength{\arraycolsep}{0em}\begin{eqnarray}
&&\mathbf{E}\left[\left|\mathcal{A}_{k+1}\right|\right] =  \sum^{k+1}_{i=1}ip^{k+1}_{R_i}\nonumber\\
&&=\sum^{k}_{i=1}i\left(p^k_{\left.R_i\right|\overline{B}_{k+1}}p_{\overline{B}_{k+1}}+p^k_{\left.R_{i-1}\right|B_{k+1}}p_{B_{k+1}}\right)+(k+1)p^{k+1}_{R_{k+1}}\nonumber\\
&&=\sum^{k}_{i=1}ip^k_{\left.R_i\right|\overline{B}_{k+1}}p_{\overline{B}_{k+1}}+\sum^{k}_{j=0}(j+1)p^k_{\left.R_j\right|B_{k+1}}p_{B_{k+1}}\nonumber\\
&&=\sum^k_{i=1}i\left[p^k_{\left.R_i\right|\overline{B}_{k+1}}p_{\overline{B}_{k+1}}+p^k_{\left.R_i\right|B_{k+1}}p_{B_{k+1}}\right]+\sum^{k}_{i=0}p^k_{\left.R_i\right|B_{k+1}}p_{B_{k+1}}\nonumber\\
&&= \mathbf{E}\left[\left|\mathcal{A}_{k}\right|\right] + p_{B_{k+1}},
\end{eqnarray}}
\endgroup


\noindent where $p_{B_{k+1}} = Pr\left\{\boldsymbol{1}_{AR_{k+1}},\boldsymbol{1}_{BR_{k+1}}\right\}$ is the probability that both $AR_{k+1}$ and $BR_{k+1}$ links are active, given by:

\begin{equation}
p_{B_{k+1}} =  Q\left(\frac{\gamma^* - \mu_{AR_{k+1}}}{\sigma_{AR_{k+1}}}\right)Q\left(\frac{\gamma^* - \mu_{BR_{k+1}}}{\sigma_{BR_{k+1}}}\right).
\end{equation}

\ifCLASSOPTIONcaptionsoff
  \newpage
\fi

\bibliographystyle{IEEEtran}
\bibliography{IEEEabrv,mybibliography}

\begin{thebibliography}{10}
\providecommand{\url}[1]{#1}
\csname url@samestyle\endcsname
\providecommand{\newblock}{\relax}
\providecommand{\bibinfo}[2]{#2}
\providecommand{\BIBentrySTDinterwordspacing}{\spaceskip=0pt\relax}
\providecommand{\BIBentryALTinterwordstretchfactor}{4}
\providecommand{\BIBentryALTinterwordspacing}{\spaceskip=\fontdimen2\font plus
\BIBentryALTinterwordstretchfactor\fontdimen3\font minus
  \fontdimen4\font\relax}
\providecommand{\BIBforeignlanguage}[2]{{%
\expandafter\ifx\csname l@#1\endcsname\relax
\typeout{** WARNING: IEEEtran.bst: No hyphenation pattern has been}%
\typeout{** loaded for the language `#1'. Using the pattern for}%
\typeout{** the default language instead.}%
\else
\language=\csname l@#1\endcsname
\fi
#2}}
\providecommand{\BIBdecl}{\relax}
\BIBdecl

\bibitem{cl}
S.~Shakkottai, T.~Rappaport, and P.~Karlsson, ``Cross-{L}ayer {D}esign for
  {W}ireless {N}etworks,'' \emph{IEEE Communications Magazine}, vol.~41,
  no.~10, pp. 74--80, Oct 2003.

\bibitem{cl5}
X.~Zhang, J.~Tang, H.-H. Chen, S.~Ci, and M.~Guizani, ``Cross-{L}ayer-based
  {M}odeling for {Q}uality of {S}ervice {G}uarantees in {M}obile {W}ireless
  {N}etworks,'' \emph{IEEE Communications Magazine}, vol.~44, no.~1, pp.
  100--106, Jan 2006.

\bibitem{cl1}
L.-C. Wang, A.~Chen, and S.-Y. Huang, ``A {C}ross-{L}ayer {I}nvestigation for
  the {T}hroughput {P}erformance of {CSMA/CA}-based {WLAN}s with {D}irectional
  {A}ntennas and {C}apture {E}ffect,'' \emph{IEEE Transactions on Vehicular
  Technology}, vol.~56, no.~5, pp. 2756--2766, Sept 2007.

\bibitem{shad3}
C.-S. Hwang, K.~Seong, and J.~Cioffi, ``Throughput {M}aximization by
  {U}tilizing {M}ulti-{U}ser {D}iversity in {S}low-{F}ading {R}andom {A}ccess
  {C}hannels,'' \emph{IEEE Transactions on Wireless Communications}, vol.~7,
  no.~7, pp. 2526--2535, 2008.

\bibitem{cl6}
G.~Femenias, J.~Ramis, and L.~Carrasco, ``Using {T}wo-{D}imensional {M}arkov
  {M}odels and the {E}ffective-{C}apacity {A}pproach for {C}ross-{L}ayer
  {D}esign in {AMC/ARQ}-based {W}ireless {N}etworks,'' \emph{IEEE Transactions
  on Vehicular Technology}, vol.~58, no.~8, pp. 4193--4203, Oct 2009.

\bibitem{cross}
V.~Mahinthan, H.~Rutagemwa, J.~W. Mark, and X.~Shen, ``Cross-{L}ayer
  {P}erformance {S}tudy of {C}ooperative {D}iversity {S}ystem with {ARQ},''
  \emph{IEEE Transactions on Vehicular Technology}, vol.~58, no.~2, pp.
  705--719, 2009.

\bibitem{cl7}
H.~Rutagemwa, T.~Willink, and L.~Li, ``Modeling and {P}erformance {A}nalysis of
  {M}ultihop {C}ooperative {W}ireless {N}etworks,'' \emph{IEEE Transactions on
  Vehicular Technology}, vol.~59, no.~6, pp. 3057--3069, July 2010.

\bibitem{shad4}
M.~Di~Renzo, J.~Alonso-Zarate, L.~Alonso, and C.~Verikoukis, ``On the {I}mpact
  of {S}hadowing on the {P}erformance of {C}ooperative {M}edium {A}ccess
  {C}ontrol {P}rotocols,'' in \emph{IEEE Global Telecommunications Conference
  (GLOBECOM)}, Dec. 2011, pp. 1--6.

\bibitem{cl8}
Y.~Zhou, J.~Liu, C.~Zhai, and L.~Zheng, ``Two-{T}ransmitter {T}wo-{R}eceiver
  {C}ooperative {MAC} {P}rotocol: {C}ross-{L}ayer {D}esign and {P}erformance
  {A}nalysis,'' \emph{IEEE Transactions on Vehicular Technology}, vol.~59,
  no.~8, pp. 4116--4127, Oct 2010.

\bibitem{cl2}
H.~Shan, H.~T. Cheng, and W.~Zhuang, ``Cross-{L}ayer {C}ooperative {MAC}
  protocol in {D}istributed {W}ireless {N}etworks,'' \emph{IEEE Transactions on
  Wireless Communications}, vol.~10, no.~8, pp. 2603--2615, August 2011.

\bibitem{cope}
S.~Katti, H.~Rahul, W.~Hu, D.~Katabi, M.~Medard, and J.~Crowcroft, ``{XOR}s in
  the {A}ir: {P}ractical {W}ireless {N}etwork {C}oding,'' \emph{IEEE/ACM
  Transactions on Networking}, vol.~16, no.~3, pp. 497--510, 2006.

\bibitem{argyriou}
A.~Argyriou, ``Wireless {N}etwork {C}oding with {I}mproved {O}pportunistic
  {L}istening,'' \emph{IEEE Transactions on Wireless Communications}, vol.~8,
  no.~4, pp. 2014--2023, 2009.

\bibitem{phoenix}
A.~Munari, F.~Rossetto, and M.~Zorzi, ``Phoenix: {M}aking {C}ooperation more
  {E}fficient through {N}etwork {C}oding in {W}ireless {N}etworks,'' \emph{IEEE
  Transactions on Wireless Communications}, vol.~8, no.~10, pp. 5248--5258,
  2009.

\bibitem{wang}
X.~Wang and J.~Li, ``Network {C}oding {A}ware {C}ooperative {MAC} {P}rotocol
  for {W}ireless {A}d {H}oc {N}etworks,'' \emph{IEEE Transactions on Parallel
  and Distributed Systems}, vol.~25, no.~1, pp. 167--179, 2014.

\bibitem{umehara}
D.~Umehara, S.~Denno, M.~Morikura, and T.~Sugiyama, ``Performance {A}nalysis of
  {S}lotted {ALOHA} and {N}etwork {C}oding for {S}ingle-{R}elay {M}ulti-{U}ser
  {W}ireless {N}etworks,'' \emph{Ad Hoc Networks}, vol.~9, no.~2, pp. 164--179,
  2011.

\bibitem{nccarq}
A.~Antonopoulos, C.~V. Verikoukis, C.~Skianis, and {\"O}.~B. Akan, ``Energy
  {E}fficient {N}etwork {C}oding-based {MAC} for {C}ooperative {ARQ} {W}ireless
  {N}etworks,'' \emph{Ad Hoc Networks}, vol.~11, no.~1, pp. 190--200, 2013.

\bibitem{cor1}
P.~Agrawal and N.~Patwari, ``Correlated {L}ink {S}hadow {F}ading in {M}ulti-hop
  {W}ireless {N}etworks,'' \emph{IEEE Transactions on Wireless Communications},
  vol.~8, no.~8, pp. 4024--4036, 2009.

\bibitem{cor2}
S.~Szyszkowicz, H.~Yanikomeroglu, and J.~Thompson, ``On the {F}easibility of
  {W}ireless {S}hadowing {C}orrelation {M}odels,'' \emph{IEEE Transactions on
  Vehicular Technology}, vol.~59, no.~9, pp. 4222--4236, 2010.

\bibitem{cor3}
A.~Lalos, M.~Di~Renzo, L.~Alonso, and C.~Verikoukis, ``Impact of {C}orrelated
  {L}og-{N}ormal {S}hadowing on {T}wo-{W}ay {N}etwork {C}oded {C}ooperative
  {W}ireless {N}etworks,'' \emph{IEEE Communications Letters}, vol.~17, no.~9,
  pp. 1738--1741, 2013.

\bibitem{cor4}
A.~Antonopoulos, M.~Di~Renzo, and C.~Verikoukis, ``Effect of {R}ealistic
  {C}hannel {C}onditions on the {E}nergy {E}fficiency of {N}etwork
  {C}oding-aided {C}ooperative {MAC} {P}rotocols,'' \emph{IEEE Wireless
  Communications}, vol.~20, no.~5, pp. 76--84, 2013.

\bibitem{thesismary}
P.~Mary, ``Etude {A}nalytique des {P}erformances des {S}ystemes
  {R}adio-{M}obiles en {P}resence d'{E}vanouissements et d'{E}ffet de
  {M}asque,'' 2008.

\bibitem{5288484}
P.~Mary, M.~Dohler, J.-M. Gorce, G.~Villemaud, and M.~Arndt, ``M-ary {S}ymbol
  {E}rror {O}utage over {N}akagami-m {F}ading {C}hannels in {S}hadowing
  {E}nvironments,'' \emph{IEEE Transactions on Communications}, vol.~57,
  no.~10, pp. 2876--2879, Oct. 2009.

\bibitem{104090}
M.~Gudmundson, ``Correlation {M}odel for {S}hadow {F}ading in {M}obile {R}adio
  {S}ystems,'' \emph{Electronics Letters}, vol.~27, no.~23, pp. 2145 --2146,
  Nov. 1991.

\bibitem{toeplitz}
W.~F. Trench, ``Properties of {S}ome {G}eneralizations of
  {K}ac-{M}urdock-{S}zeg{\"o} {M}atrices,'' \emph{Structured Matrices in
  Mathematics, Computer Science and Engineering II}, pp. 233--245, 2001.

\bibitem{80211}
D.~Vassis, G.~Kormentzas, A.~Rouskas, and I.~Maglogiannis, ``The {IEEE} 802.11g
  {S}tandard for {H}igh {D}ata {R}ate {WLAN}s,'' \emph{IEEE Network}, vol.~19,
  no.~3, pp. 21--26, 2005.

\bibitem{book2}
M.~R. Spiegel, \emph{Mathematical Handbook of Formulas and Tables}, ser.
  Schaum's Outline Series.\hskip 1em plus 0.5em minus 0.4em\relax McGraw-Hill,
  1968.

\bibitem{1969method}
N.~M. Steen, G.~D. Byrne, and E.~M. Gelbard, ``Gaussian {Q}uadratures for the
  {I}ntegrals $\int^{\infty}_{0}exp\left[-x^2\right]f(x)dx$ and
  $\int^{b}_{0}exp\left[-x^2\right]f(x)dx$,'' \emph{Mathematics of
  Computation}, vol.~23, no. 107, pp. 661--671, Mar. 1969.

\bibitem{bianchi}
G.~Bianchi, ``Performance {A}nalysis of the {IEEE} 802.11 {D}istributed
  {C}oordination {F}unction,'' \emph{IEEE Journal on Selected Areas in
  Communications}, vol.~18, no.~3, pp. 535--547, 2000.

\bibitem{80211std}
``Draft {IEEE} {S}tandard for {I}nformation {T}echnology {T}elecommunications
  and {I}nformation {E}xchange between {S}ystems {L}ocal and {M}etropolitan
  {A}rea {N}etworks {S}pecific {R}equirements {P}art 11: {W}ireless {M}edium
  {A}ccess {C}ontrol ({MAC}) and {P}hysical {L}ayer ({PHY}) {S}pecifications:
  {A}mendment 6 by {IEEE} {S}td 802.11g-2003 and {IEEE} {S}td 802.11h-2003),''
  \emph{IEEE Std P802.11i/D10.0}, pp.~--, 2004.

\bibitem{ebert}
J.~Ebert, S.~Aier, A.~Kofahl, B.~Becker, B.~Burns, and A.~Wolisz, ``Measurement
  and simulation of the energy consumption of a wlan interface,''
  Telecommunication Networks Group, Technical University of Berlin, Tech. Rep.
  Tech. Rep. TKN-02-010, June 2002.

\bibitem{out_cor}
R.~D. Stevens and I.~Dilworth, ``Mobile {R}adio {S}hadowing {L}oss
  {V}ariability and {C}o-channel {S}ignal {C}orrelation at 452 {MH}z,''
  \emph{Electronics Letters}, vol.~32, no.~1, pp. 16--17, 1996.

\bibitem{in_cor1}
K.~Butterworth, K.~Sowerby, and A.~Williamson, ``Correlated {S}hadowing in an
  {I}n-{B}uilding {P}ropagation {E}nvironment,'' \emph{Electronics Letters},
  vol.~33, no.~5, pp. 420--422, 1997.

\bibitem{in_cor2}
------, ``Base {S}tation {P}lacement for {I}n-{B}uilding {M}obile
  {C}ommunication {S}ystems to {Y}ield {H}igh {C}apacity and {E}fficiency,''
  \emph{IEEE Transactions on Communications}, vol.~48, no.~4, pp. 658--669,
  2000.

\end{thebibliography}

\end{document}